\documentclass[letterpaper, 11 pt, conference]{ieeeconf}  
\usepackage[utf8]{inputenc}
\overrideIEEEmargins    

 \usepackage{amsfonts}	
\usepackage{float}
\usepackage{amssymb}
\usepackage{mathtools}	
\usepackage{mathabx}
\usepackage{xcolor} 
\usepackage{graphics}
\usepackage{url}
\usepackage{booktabs}

\newcommand{\blue}[1]{{\textcolor{black}{#1}}}

\DeclareMathOperator{\Dom}{Dom}	

\newtheorem{remark}{Remark}
\newtheorem{theorem}{Theorem}
\newtheorem{assumption}{Assumption}

\newtheorem{lemma}{Lemma}

\newcommand{\brackets}[1]{\left[#1\right]}
\newcommand{\braces}[1]{\left(#1\right)}
\newcommand{\angles}[1]{\langle#1\rangle}

\newcommand{\vect}[1]{\boldsymbol{#1}}
\newcommand{\setof}[1]{\left\{#1\right\}}
\newcommand{\norm}[1]{\|#1\|}

\DeclareMathOperator{\col}{col}

\newcommand{\qedwhite}{\hfill \ensuremath{\Box}}

\usepackage[nolist]{acronym}
\usepackage{hyphenat}

\allowdisplaybreaks

\title{On modelling and stabilizability of current-controlled piezoelectric
material}

\author{Matthijs C. de Jong and Jacquelien M. A. Scherpen 
		\thanks{ M.C. de Jong and J.M.A. Scherpen are with Jan C. Wilems Center for Systems and Control, ENTEG, Faculty of Science and Engineering, University of Groningen, Nijenborgh 4, 9747 AG Groningen, the Netherlands (email: \{  matthijs.de.jong, j.m.a.scherpen\}@rug.nl).}
	}

\begin{document}

\maketitle
\pagenumbering{arabic}
\begin{abstract}
      This paper presents a new modelling approach to fully dynamic electromagnetic current-controlled piezoelectric composite models that require a combined Lagrangian. To model the mechanical domains, we consider two different beam theories, i.e. the Euler-Bernoulli and Timoshenko beam theories. We show that both derived piezoelectric composite models are well-posed. Furthermore, we show through analysis and simulations that both current-controlled piezoelectric composites are asymptotically stabilizable through simple electric feedback, which renders the system passive in a classical way for certain system parameters. In this work, we also review several related piezoelectric beams, actuators, and composite models.      
\end{abstract}

\begin{keywords}
    Modelling, piezoelectric beam, piezoelectric actuator, piezoelectric composite, Maxwell's equations, electromagnetic considerations, Euler-Lagrange, combined Lagrangian, current-control, current-control through the boundary, partial differential equations, PDE, asymptotically stabilizable, infinite dimensional systems, Lyapunov theory 
\end{keywords}

\begin{acronym}
    \acro{1D}{one\hyp{}dimensional}
    \acro{2D}{two\hyp{}dimensional}
    \acro{A.S.}{Assymptotically Stabilizable}
    \acro{BIBO}{Bounded Input Bounded Output}
    \acro{CbI}{Control by interconnection}
    \acro{EB}{Euler\hyp{}Bernoulli}
    \acro{EBBT}{Euler\hyp{}Bernoulli beam theory}
    \acro{IBP}{integration by parts}
    \acro{IDA}{Interconnection and Damping Assignment}
    \acro{ODE}{Ordinary Differential Equation}
    \acro{ODEs}{Ordinary Differential Equations}
    \acro{PBC}{Passivity\hyp{}based control}
    \acro{PD}{Proportional-Derivative}
    \acro{PDE}{Partial Differential Equation}
    \acro{PDEs}{Partial Differential Equations}
    \acro{pH}{port-Hamiltonian}
    \acro{PI}{Proportional-Integral}
    \acro{PID}{Pro-portional\hyp{}Integral\hyp{}Derivative}
    \acro{PZT}{lead zirconate titanate}
    \acro{T}{Timoshenko}
    \acro{TBT}{Timoshenko beam theory}
\end{acronym}

\section{Introduction}
    A piezoelectric actuator is a piece of piezoelectric material sandwiched between two layers of electrodes. The actuator can be compressed or elongated in one or more directions by applying an electric stimulus, such as voltage, charge, or current \cite{smith2005smart}. A specific type of electric actuator is the piezoelectric beam, where an electric stimulus acting on the transverse axis incurs deformation in the longitudinal direction. The simplest models for piezoelectric beams are similar to wave equations and describe the longitudinal vibrations of the stresses and strains in the piezoelectric material \cite{MenOSIAM2014}. Several variants exist corresponding to different electrical inputs, such as voltage, charge and current. In either case, the model is similar to the well-known wave equation and exhibits some interesting properties making it useful for many applications. By bonding a piezoelectric actuator onto the surface of a mechanical substrate, the deformation of the actuator incurs shear stress in the substrate, resulting in the curvature of the composition. A mechanical substrate with one or more piezoelectric actuators we refer to as a piezoelectric composite and is useful in high-precision applications.\\

	From a control perspective, two types of applications exist for piezoelectric composites: vibration control and shape control. Vibration control finds applications in acoustic devices \cite{Ralib2015} or suppression of vibrations in mechanical systems \cite{Samikkannu2002vibrationdamp}. Shape control includes applications such as flexible wings \cite{CHUNG2009136}, inflatable space structures \cite{voss2010port}, and deformable mirrors \cite{Radzewicz04}. Often, in applications including inflatable space structures and deformable mirrors, one side of the substrate has a specific function (e.g. reflecting electromagnetic waves). Therefore, in this work, we focus on composites where the piezoelectric actuator is attached to one side of the purely mechanical layer; see Fig \ref{ch3:fig:piezoelectriccomposite} for a depiction.\\
	
	\begin{figure}
		\centering
		\includegraphics[width=\columnwidth]{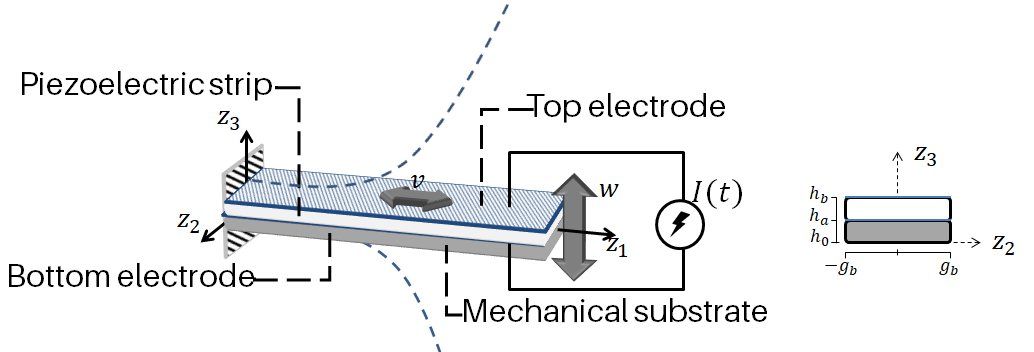}
		\caption[Current-actuated piezoelectric composite]{The piezoelectric composite consists of a piezoelectric actuator (top) bonded to a mechanical substrate (bottom). The piezoelectric actuator consists of a piezoelectric strip sandwiched by a top and bottom electrode. Whereas the piezoelectric actuator can deflect in the longitudinal direction $(z_1)$, the piezoelectric composite can deflect both longitudinally $v$ and transversely $w$ due to the shear stresses between the two layers.}
		\label{ch3:fig:piezoelectriccomposite}
	\end{figure}

	The dynamics of a piezoelectric composite are governed by a set of partial differential equations (PDEs) that originate from continuum mechanics in the mechanical domain and Maxwell's equations in the electromagnetic domain. PDEs with distinct structures and properties are derived by changing the mechanical or electromagnetic domain assumptions. Various beam theories can be assumed for the mechanical domain, and complex non-linear phenomena can be incorporated. Different treatments of Maxwell's equations for the electromagnetic domain result in different dynamics and coupling of the electric, magnetic, and mechanical quantities. Often, in literature, the assumptions for the electromagnetic domain are typified as a \textit{static electric field}, a \textit{quasi-static electric field}, or a \textit{fully dynamic electromagnetic field}. Recent efforts show that the controllability and stabilizability of the models of piezoelectric beams and composites can be significantly altered by choice of input and the treatment of the electromagnetic domain \cite{MenOSIAM2014,VenSSIAM2014,Ozkan2019comp}.\\
	
    The traditional choice of actuation is voltage. Voltage-actuated linear infinite-dimensional piezoelectric beam models are exactly controllable and exponentially stabilizable in the case of a static or quasi-static electric field, see \cite{komornik2005}, \cite{KapitonovMM07}, and \cite{LASIECKA2009167}. In \cite{MenOSIAM2014}, the fully dynamic voltage actuated beam model is asymptotically stabilizable for almost all system coefficients and exponentially stabilizable for a small set of system coefficients. In \cite{OzerMCSS2015}, it has been shown that the fully dynamical beam is polynomially stabilizable when certain conditions on the physical coefficients are satisfied.\\
	
	Due to the less-hysteretic behaviour of charge and current actuated piezoelectric systems, recent studies involved the stabilizability of such models. The charge-actuated models show similar stabilizability properties to voltage-actuated systems due to their duality and the fact that both have boundary inputs. There are two ways to derive current actuated piezoelectric models. {The simplest is obtained by adding a dynamical equation on the boundary to convert the charge input into a current input, which we will refer to as \textit{current-through-the-boundary} actuation \cite{OverviewpaperdeJong_arxiv}. Physically, this corresponds to incorporating some electric circuitry.} The other way of obtaining current actuated systems is by careful considerations for the electromagnetic domain, resulting in a \textit{purely} current actuated system. The modelling of purely current actuated piezoelectric systems is described in \cite{MenOSIAM2014,OzerTAC2019}.\\

	The stabilizability of current-actuated piezoelectric beams has been investigated under a fully dynamic electromagnetic field. The model is derived using magnetic vector potentials that require an additional gauge condition to guarantee a solution in \cite{MenOCDC2014}. It has been shown that the control input is bounded in the energy space and can utmost asymptotically stabilize the system. Recently, in \cite{OzerTAC2019}, the current actuated piezoelectric beam model from \cite{MenOCDC2014} has been interconnected with a substrate. Due to magnetic vector potentials, the model allows for both current and charge input. It has been shown that the reduced electrostatic models with charge and current-through-the-boundary models are respectively exponentially and asymptotically stabilizable, and the derived purely current actuated fully dynamical model is not stabilizable, which is against intuition.\\
	
	In \cite{VenSSIAM2014}, two current actuated non-linear piezoelectric composite systems are derived from a port-Hamiltonian \cite{port-HamiltonianIntroductory14} perspective. One system is derived as a purely current-actuated system with a quasi-static electric field assumption through interconnection, and the other is a current-through-the-boundary system with a fully dynamic electromagnetic field assumption. The approximations of the fully dynamic piezoelectric beam \cite{VenSSIAM2014}, using the mixed finite element method \cite{GoloSchaft2004}, has been shown in \cite{voss2011CDC} to satisfy a necessary condition for asymptotic stabilizability, whereas the quasi-static systems lack this condition. \\

    \blue{The purely current actuated piezoelectric fully dynamical models from \cite{MenOCDC2014} and \cite{OzerTAC2019} are derived with the use of the electric and magnetic potentials, related to the electric field by use of Gauss's magnetic law \eqref{ch3:eq:GausssMagnetic}. The purely current actuated quasi-static piezoelectric model from \cite{VenSSIAM2014} and \cite{voss2011CDC} is derived by imposing an algebraic constraint on the electric field, reducing the electromagnetic coupling to the quasi-static electric field assumption using different quantities to express the reduced electromagnetic domain. The coupled dynamics are obtained through the passive interconnection of the quasi-static dynamical equation and the mechanical beam equations in a similar fashion as the fully dynamical current-through-the-boundary model in \cite{VenSSIAM2014}, which uses electric flux and charge to describe the electromagnetic dynamics. The extensive version, which includes the electromagnetic coupling of the quasi-static piezoelectric beam and composite model described in \cite{VenSSIAM2014} is, to the best of our knowledge, not present in the literature and is helpful to determine the relations to the quasi-static and fully dynamical current actuated piezoelectric models and their stabilization properties.
 }

\subsection{Contributions}
 
    This work presents a novel, fully dynamic piezoelectric composite model with purely current input. We use, to the best of our knowledge, a new treatment of the electromagnetic domain that does not require the use of a gauge function. We show that the new model is the more extensive version which includes the fully electromagnetic coupling of the quasi-static electromagnetic model in \cite{voss2011CDC} and \cite{VenSSIAM2014}.  We consider both the Euler-Bernoulli and Timoshenko beam theories for the mechanical domain and investigate the well-posedness of the derived piezoelectric composite models. Finally, we investigate the stabilizability of the derived piezoelectric composite models and accompany our results with simulations.\\
	
	
	
	The set-up of this paper allows us to present a comprehensive modelling framework for current controlled piezoelectric composites and presents the following key contributions:
    \begin{enumerate} 
        \item A novel modelling approach is developed to capture the dynamics of the electromagnetic domain for current controlled piezoelectric composites and actuators by using a \textit{combined Lagrangian} composed of a mechanical Lagrangian and electromagnetic co-Lagrangian which are coupled through non-energetic elements, known as traditors \blue{\cite{Duinkers1959Traditors}}. 
        \item Two well-posed current-controlled piezoelectric composite models are derived with fully dynamic electromagnetic field, using the novel modelling approach for different beam theories to capture the behaviour of the mechanical domain, i.e. use the Euler-Bernoulli and Timoshenko beam theory in the different models.
        \item It is proved that the derived fully dynamic electromagnetic current-controlled piezoelectric composite models are asymptotically stabilizable for certain system parameters.
        \item The asymptotically stabilizing behaviour of the closed-loop system obtained through classical passivity techniques is illustrated through simulations. 
    \end{enumerate}
    
\subsection{Outline}
    In Section \ref{ch3:sec:modelderivation}, we treat the derivation of the novel piezoelectric composite models for both the Euler-Bernoulli and Timoshenko beam theory and compare the treatment of Maxwell's equations to the existing modelling approaches for piezoelectric material. In Section \ref{ch3:SEC:well-posedness}, we show that both derived piezoelectric composites' are well-posed with the use of semigroup theory \cite{CurtainZwart1995introduction} and in Section \ref{ch3:sec:comparison_beamandcomp}, we compare the derived models to existing piezoelectric beam and composite models. Furthermore, in Section \ref{ch3:sec:Stabz}, we investigate the asymptotic stabilizability properties of the approximated composites and provide some illustrative simulations to accompany the stabilizability results in Section \ref{ch3:sec:simulations}. Finally, in Section \ref{ch3:sec:discussions}, we give some concluding remarks and future research directions.

\section{Model derivation of current controlled  piezoelectric actuators and composites}\label{ch3:sec:modelderivation}
The piezoelectric composite model depicted in Fig \ref{ch3:fig:piezoelectriccomposite} is composed of two layers, which we consider to be perfectly bonded. The top layer of the composite is the piezoelectric actuator, and the bottom layer is a purely mechanical substrate which we denote respectively using the subscript $p$ and $s$. For both layers we consider a volume $\Omega$ with length $\ell$, width $2g_b$ , and thickness $h=h_b-h_a$ in the Cartesian coordinate system $z_1,z_2,z_3$ with unit vectors $(\vect{z_1},\vect{z_2},\vect{z_3})$, as depicted in  Fig \ref{ch3:fig:piezoelectriccomposite}. Let $k\in\{p,s\}$ then the body $\Omega^k$ of these layers can be defined as follows
\begin{align*}
&\Omega^k:= \left\{ (z_1,z_2,z_3)\ \mid  0\leq z_1\leq \ell, \right. \\ & \quad\left.-g_b\leq z_2 \leq g_b,\ h^k_a\leq z_3 \leq h^k_b\right\} ,
\end{align*}
where we assume that the length $\ell$ is significantly larger than the width and thickness of the volume and we have that $h^s_a=h_0=-h_1$, $h^s_b=h_1=h^p_a$, and $h^p_b=h_2$. We denote the longitudinal deformation along $z_1$ by $v(z_1,t)$, the transverse deformation along $z_3$ by $w(z_1,t)$, and the rotation of the beam given by $\phi(z_1,t)$. Furthermore, denote the strain $\vect{\epsilon}=\col(\epsilon_{11}, \epsilon_{22} , \epsilon_{33}, \epsilon_{23}, \epsilon_{31}, \epsilon_{21})$, stress $\vect{\sigma}=\col(\sigma_{11}, \sigma_{22}, \sigma_{33}, \sigma_{23}, \sigma_{31}, \sigma_{21})$, electric displacement $\vect{D}=\col(D_1,D_2,D_3)$, and the electric field $\vect{E}=\col(E_1,E_2,E_3)$ and consider the linear piezoelectric constitutive relations \cite{IEEEstandardPiezo,tiersten1969linear}, coupling the mechanical and electromagnetic domain as follows,
 \begin{alignat}{1}\label{ch3:eq:Constitutive_relations}
    	\left[\begin{array}{c}
            	\vect{\sigma}\\
            	\vect{D}
        	\end{array}\right]= & \left[\begin{array}{cc}
            	C^E & -e\\
            	e^T & \varepsilon
            	\end{array}\right]\left[\begin{array}{c}
            	\vect{\epsilon}\\
            	\vect{E}
        	\end{array}\right],
    	\end{alignat}
where $C^E$ is the $6\times6$ stiffness matrix, $e$ denotes the $6\times 3$ piezoelectric constants matrix, and $\varepsilon$ is the $3\times 3$ diagonal permittivity matrix \cite{tiersten1969linear,IEEEstandardPiezo,preumont_piezoelctric2006a}. The constitutive relations \eqref{ch3:eq:Constitutive_relations} take care of the coupling between the mechanical and electromagnetic domains. For the purely mechanical substrate, the electromechanical coupling is not present, i.e. $e=[0]$. Let's denote the stiffness coefficient, shear modulus, piezoelectric coefficient, and permittivity constant by $C^{k,E}_{11}, G:=C^{k,E}_{55}, \gamma:=e_{15}, \varepsilon_{33}$, respectively. For linear isotropic piezoelectric dielectric material with polarization in the $z_3-$direction, i.e. $E_1,E_2=0$ we obtain the  displacement field $\vect{u}$, strain and constitutive relations for the Euler Bernoulli beam theory (EBBT) \cite{carrera2011beam} as follows,
\begin{subequations}\label{ch3:eq:EBBT_equations}
\begin{align}
    \begin{split}\label{ch3:eq:disp_EBBT_equations}
        u^k_1(z_1)&=v(z_1)-\vect{z_3}\frac{\partial}{\partial  z_1}w(z_1)\\
        u^k_3(z_1)&=w(z_1),\\
    \end{split}\\
     \begin{split}\label{ch3:eq:strain_EBBT_equations}
        \epsilon^k_{11}&=\frac{\partial}{\partial z_1}v(z_1)-\vect{z_3}\frac{\partial^2}{\partial z_1^2}w(z_1),\\
    \end{split}\\
     \begin{split}\label{ch3:eq:const_EBBT_equations}
        \sigma^s_{11}&=C_{11}^{s,E}\epsilon_{11}\\
         \sigma^p_{11}&=C_{11}^{p,E}\epsilon_{11}-\gamma E_3\\
        D_3&=\varepsilon_{33} E_3 + \gamma \epsilon_{11}.
    \end{split}
\end{align}
\end{subequations}
For the Timoshenko beam theory (TBT) \cite{carrera2011beam} the  displacement field, strain and constitutive relations are as follows,
\begin{subequations}\label{ch3:eq:TBT_equations}
\begin{align}
    \begin{split}\label{ch3:eq:disp_TBT_equations}
         u_1^k(z_1)&=v(z_1)-\vect{z}_3\phi(z_1)\\
        u_3^k(z_1)&=w(z_1),\\
    \end{split}\\
     \begin{split}\label{ch3:eq:strain_TBT_equations}
        \epsilon^k_{11}&=\frac{\partial}{\partial z_1}v(z_1)-\vect{z_3}\frac{\partial}{\partial z_1}\phi(z_1)\\
        \epsilon_{13}^k&=\frac{1}{2}\braces{\frac{\partial}{\partial z_1}w(z_1)-\phi(z_1) },
    \end{split}\\
     \begin{split}\label{ch3:eq:const_TBT_equations}
     \sigma^s_{11}&=C_{11}^{s,E}\epsilon_{11}\\
        \sigma^p_{11}&=C_{11}^{p,E}\epsilon_{11}-\gamma E_3\\
        \sigma^k_{13}&=G^k\epsilon_{13}\\
         D_3&=\varepsilon_{33} E_3 + \gamma \epsilon_{11}.
    \end{split}
\end{align}
\end{subequations}
To model the fully dynamic electromagnetic current-controlled piezoelectric composite, we consider two types of energy; the mechanical energies, composed of the kinetic co-energy $(T^{k,\ast})$ and the potential energy $(V^k)$, used for both the mechanical substrate and the piezoelectric actuator; and the electromagnetic energies, composed of the electric energy $(\mathcal{E})$ and magnetic energy $(\mathcal{M})$, used for the piezoelectric actuator. Let $\rho^k$ denote the mass density of the material and let the vectors  $\vect{B}$ and $\vect{H}$ denote the magnetic field and the magnetic field intensity. Then, the energies of the piezoelectric composites are given as follows,
\begin{subequations}\label{ch3:eq:energies_all}
\begin{align}
      T^{k,\ast}&=\frac{1}{2}\int_\Omega \rho^k(\dot{\vect{u}^k} \cdot \dot{\vect{u}^k})~d\Omega, \label{ch3:eq:energy_mechkinetic}\\
    V^k&=\frac{1}{2}\int_\Omega \vect{\sigma^k}\cdot\vect{\epsilon^k}~d\Omega.\label{ch3:eq:energy_mechpotential}\\
        \mathcal{E}&=\frac{1}{2}\int_\Omega  \vect{D}\cdot \vect{E}\ d\Omega, \label{ch3:eq:Energy_Electric} \\
        \mathcal{M}&=\frac{1}{2}\int_\Omega \vect{H}\cdot\vect{B}\ d\Omega,\label{ch3:eq:Energy_Magnetic} 
\end{align}
\end{subequations}
where we omit the spatial dependency (on $z_1$). To describe the behaviour of the fully dynamic electromagnetic field of the piezoelectric actuator we require Maxwell's equations for dielectrics and the piezoelectric constitutive relations for permeable material \cite{eom2013maxwell,tiersten1969linear}.
Therefore, let $\mu$ represent the magnetic permeability of the material, denote the volume charge density by $\sigma_v$, and let $\vect{J}$ denote the free current charges.  
Then, Maxwell's equations  \cite{eom2013maxwell} can be written by the four laws;
        	\begin{subequations}\label{ch3:eq:Maxwells_full}
    	\begin{align}
        		\nabla\times \vect{E}&=-\frac{\partial \vect{B}}{\partial t}, &\,\text{Faraday's law} \label{ch3:eq:Faraday'sLaw} \\
        	\nabla\cdot\vect D&=\sigma_v, &\,\text{Gauss's Electric law}\label{ch3:eq:GausssElectric} \\
        	\nabla\times \vect{H}&=\frac{\partial \vect D}{\partial t}+\vect{J}, &\,\text{Max-Ampere's law}\label{ch3:eq:Ampereslaw} \\
        	\nabla\cdot\vect{B}&=0,  &\,\text{Gauss's Magnetic law}\label{ch3:eq:GausssMagnetic}
    	\end{align}
    	\end{subequations}
     Additionally, we consider the two constitutive relations
    	\begin{subequations}\label{ch3:eq:constrel_Maxwell}
    	\begin{align}
        	\vect{D}&=\varepsilon\vect{E},\label{ch3:eq:constrel_MaxwellD}\\
        	\mu\vect{H}&=\vect{B},\label{ch3:eq:constrel_MaxwellH}
    	\end{align}
    	\end{subequations}
    	for isotropic magnetic permeable material. The formerly presented material is similar to sections of \cite{MenOACC2014} and \cite{Ozkan2019comp}, which look into voltage and current-actuated piezoelectric beams and composites, respectively. Similarly to voltage controlled piezoelectric actuators, we have that $E_3\neq 0$ and $E_1=E_2=0$, which reduces \eqref{ch3:eq:Maxwells_full} and \eqref{ch3:eq:constrel_Maxwell} to scalar equations with $E_3, D_3, H_2$,  $B_2$ $J_3$ as the remaining nonzero physical quantities. \\
    	
    	In this work, we consider two piezoelectric composites where the piezoelectric actuators are actuated by applying an electric current flowing across the $z_3$ direction of the piezoelectric layer. The modelling approach presented here circumvents the need for a gauge function to ensure a well-posed actuator or composite, as opposed to \cite{Ozkan2019comp}. This is accomplished by defining the magnetic flux $\Phi$ as follows,
    	\begin{align}\label{ch3:eq:magneticflux_flow3}
            \Phi(z_1):=\int_{0}^{z_1}{B_2}(\xi)d\xi,
    	\end{align}
    and obtain from  Faraday's law \eqref{ch3:eq:Faraday'sLaw} the relations
    	\begin{align}\label{ch3:eq:magneticfluxexpressions} 
        	\begin{split}
            	\frac{\partial}{\partial z_1}{\Phi}&={B}_2,\\
            	\dot{\Phi}&=E_3
        	\end{split}
    	\end{align}	
    The expressions in \eqref{ch3:eq:magneticfluxexpressions} are useful for writing the
    energies \eqref{ch3:eq:energies_all} in scalar form for to the novel current-controlled piezoelectric actuator and composite model. Furthermore, we consider a combined Lagrangian to derive the dynamical equations using Hamilton's principle \cite{lanczos1970variational}. For electromagnetic systems, it may be possible that due to dissipation or the source, a Lagrangian or co-Lagrangian formulation is not sufficient to derive the equations of motion and the so-called combined Lagrangian formulation is required \cite{JeltsemaMDM2009}, which is the case here due to the current source. Due to the current source, we are dealing with a force balance that needs to be coupled with the mechanical flow balance. The combined Lagrangian is composed of a Lagrangian function coupled with a co-Lagrangian function and takes care of the coupling through non-energetic coupling terms known as traditors \cite{Duinkers1959Traditors}.\\
    
        In classical mechanics, the Lagrangian is defined as the difference between the kinetic (Co-)energy $\mathcal{T}^\ast$ and the total potential energy $\mathcal{V}$. The kinetic co-energy is the dual (i.e. complementary form) of the kinetic energy and is related to each other by the Legendre transformation. \blue{A property of the (kinetic) co-energies (annotated by $^\ast$) is the association with a flow (i.e. the rate of change of the generalised displacements).} Whereas the (potential) energy is associated with the generalised displacement \cite{JeltsemaMDM2009}. Besides the co-energy, we also have the notion of the co-Lagrangian functional $\mathcal{L}^\ast$, which is the dual of the Lagrangian functional $\mathcal{L}$, see Table \ref{ch3:tab:domainLagrangian} for an overview of the composition of the Lagrangian and co-Lagrangian functional for the mechanical and electromagnetic domain \cite{JeltsemaMDM2009}. For linear mechanical systems, we have that the kinetic co-energy and kinetic energy are similar, i.e. $\mathcal{T}^\ast(f)=\mathcal{T}(p)$, where $f$ and $p$ denote the (generalized) velocity and (generalized) momentum, respectively. Therefore, we often do not care about the difference between kinetic energy and kinetic co-energy. However, for multi-domain modelling, the use of energy and co-energy is relevant, especially when a coupling is required between a flow balance and a force balance.\\
        
        The coupling terms, known as traditors \cite{Duinkers1959Traditors}, facilitate the coupling between the velocity and force balance resulting from the respective Lagrangian and co-Lagrangian formulation \cite{JeltsemaMDM2009}. Traditors are characterized by the property that at any moment in time, the total power delivered to these multi-port elements is zero. Therefore, traditors are not present in the total energy function. Two linear examples of traditors are the gyrator and transformer \cite{Duinkers1959Traditors}.   \\    
   
        \begin{table}
        \centering
      \begin{tabular}{|r||r|r|}
      \hline
      Domain & Lagrangian & co-Lagrangian \\
      \hline\hline
      Mechanical  & Kinetic co-energy   & Potential co-energy \\ 
                  & Potential energy    & Kinetic Energy \\ \hline
      Electromagnetic & Magnetic co-energy & Electric co-energy \\
        & Electric energy   & Magnetic energy\\  \hline
    \end{tabular}
    \caption[Lagrangian and co-Lagrangian for mechanical and electromagnetic domains]{Lagrangian and co-Lagrangian for mechanical and electromagnetic domains.}
    \label{ch3:tab:domainLagrangian}
    \end{table}
    
    In our work, we consider two piezoelectric composite models by using two mechanical beam theories, \ac{EBBT} and \ac{TBT}, denoted 
    by using \ac{EB} and \ac{T} as subscripts. The used combined Lagrangian \eqref{ch3:eq:Lagrangian} takes the form,
    $$\mathcal{L}_j= \mathcal{L}_{\text{mech},j}+\mathcal{L}_{\text{em},j}^\ast,$$ 
    where $j\in \setof{\text{EB,T}}$ to differentiate between the different beam theories. The combined Lagrangian is composed of the considered energies \eqref{ch3:eq:energies_all}, which are given in more detail below \blue{ and contains the non-energetic coupling terms (traditors)}.
    
    Therefore, define the cross-section and inertia corresponding to the piezoelectric actuator and mechanical substrate layers as
    	\begin{align}\label{ch3:eq:crossseactionalA}
        	A^k&:=\int_{h_a}^{h_b}\int_{-g_b}^{g_b}dz_2dz_3=2g_b(h_b-h_a),\nonumber\\
        	I^k&:=\int_{h_a}^{h_b}\int_{-g_b}^{g_b}{z}_3^2dz_2dz_3=\frac{2}{3}g_b(h_b^3-h_a^3),\\ 
        	I_0^k&:=\int_{h_a}^{h_b}\int_{-g_b}^{g_b}\vect{z}_3dz_2dz_3=\left[g_b{z}_3^2\right]_{h_a}^{h_b}=g_b(h_b^2-h_a^2).\nonumber \\ 
        	A_q&:=\int_0^\ell dz_2 dz_1=\int_0^\ell 2g_b dz_1,\nonumber
	\end{align}
    where $A_q$ denotes the surface in the $z_1 z_2$ plane where the charges flow through. Note that $I_0^s=0$ since $h^s_b=-h^s_a$.\\
    
 \textbf{\ac{EBBT} energies:}\\
 Now we can give the explicit forms of the energies \eqref{ch3:eq:energies_all} of the layers of the  piezoelectric composite, using the \ac{EBBT} as follows,
\begin{subequations}\label{ch3:eq:EBBTenergies_all_voluit}
\begin{align}
      T&^{k,\ast}_{EB}=\frac{1}{2}\rho^k \int_\Omega \dot{u}_1^2+\dot{u}_3^2~ d\Omega  \label{ch3:eq:EBBTenergy_mechkinetic_scalar}  \\ 
       &=\frac{1}{2}\int_{0}^{\ell}\brackets{\rho^k\braces{A^k\dot{v}^2-2I^k_0\dot{v}\dot{w}_{z_1}+I^k\dot{w}^2_{z_1}+A^k\dot{w}^2}}dz_1, \nonumber\\
      V&^s_{EB}=\frac{1}{2}\int_\Omega \sigma^s_{11}\epsilon^s_{11}~d\Omega \label{ch3:eq:EBBTenergy_mechpotential_scalar_s}  \\
       &=\frac{1}{2}\int_{0}^{\ell}\left[C^s_{11}\braces{A^s v_{z_1}^2+I^s w_{z_1z_1}^2}\right. \nonumber \\
    V&^p_{EB}=\frac{1}{2}\int_\Omega \sigma_{11}\epsilon_{11}~d\Omega \label{ch3:eq:EBBTenergy_mechpotential_scalar_p}  \\
       &=\frac{1}{2}\int_{0}^{\ell}\left[C^p_{11}\braces{A^p v_{z_1}^2+I^p w_{z_1z_1}^2-2I^p_0v_{z_1}w_{z_1z_1}}\right. \nonumber \\
       &\quad \left.-\gamma \braces{A^p v_{z_1}-I_0^p w_{z_1z_1}}\dot{\Phi}\right]dz_1,\nonumber \\
    \mathcal{E}&^\ast_{EB}=\frac{1}{2}\int_\Omega  {D}_3 {E}_3\ d\Omega, \label{ch3:eq:EBBTEnergy_Electric_scalar}  \\
     &=\frac{1}{2}\int_{0}^{\ell}\brackets{\varepsilon_{33}A^p\dot{\Phi}^2+\gamma \braces{A^p v_{z_1}-I^p_0w_{z_1z_1}}\dot{\Phi} }dz_1, \nonumber\\
      \mathcal{M}&=\frac{1}{2}\int_{\Omega}{\frac{1}{\mu}B_2^2}d\Omega=\frac{1}{2}\int_{0}^{\ell}\brackets{\frac{1}{\mu}A^p\Phi_{z_1}^2}dz_1,\label{ch3:eq:EBBTEnergy_Magnetic_scalar}
		\end{align}
\end{subequations}

where we made use of \eqref{ch3:eq:magneticflux_flow3}, \eqref{ch3:eq:EBBT_equations} and \eqref{ch3:eq:crossseactionalA}.

 \textbf{\ac{TBT} energies:}\\
 The energies for the piezoelectric actuator using T beam theory are as follows,
\begin{subequations}\label{ch3:eq:TBTenergies_all_voluit}
\begin{align}
      T^{k,\ast}_{T}&=\frac{1}{2}\rho^k \int_\Omega \dot{u}_1^2+\dot{u}_3^2~ d\Omega  \label{ch3:eq:TBTenergy_mechkinetic_scalar}  \\ 
       &=\frac{1}{2}\int_{0}^{\ell}\brackets{\rho^k\braces{A^k\dot{v}^2-2I^k_0\dot{v}\dot{\phi}+I^k\dot{\phi}^2+A^k\dot{w}^2}}dz_1, \nonumber\\
     V^s_{T}&=\frac{1}{2}\int_\Omega \sigma_{11}\epsilon_{11} +\sigma_{13}\epsilon_{13}~d\Omega  \label{ch3:eq:TBTenergy_mechpotential_scalar_s}  \\
       &=\frac{1}{2}\int_{0}^{\ell}\left[C^{s,E}_{11}\braces{A^sv_{z_1}^2+I^s\phi_{z_1}^2}\right. \nonumber \\
       &\quad\left. +\frac{1}{2}A^sG^s\braces{w_{z_1}^2-2w_{z_1}\phi+\phi^2}\right]dz_1,\nonumber \\ 
     V^p_{T}&=\frac{1}{2}\int_\Omega \sigma_{11}\epsilon_{11} +\sigma_{13}\epsilon_{13}~d\Omega  \label{ch3:eq:TBTenergy_mechpotential_scalar_p}  \\
       &=\frac{1}{2}\int_{0}^{\ell}\left[C^{p,E}_{11}\braces{A^pv_{z_1}^2+I^p\phi_{z_1}^2-2I^p_0v_{z_1}\phi_{z_1}}\right. \nonumber \\
       &\quad \left.-\gamma\braces{A^p v_{z_1}-I^p_0\phi_{z_1}}\dot{\Phi}\right.\\
       &\quad\left. +\frac{1}{2}A^p G^p\braces{w_{z_1}^2-2w_{z_1}\phi+\phi^2}\right]dz_1,\nonumber \\
    \mathcal{E}^\ast_{T}&=\frac{1}{2}\int_\Omega  {D}_3 {E}_3\ d\Omega, \label{ch3:eq:TBTEnergy_Electric_scalar}  \\
     &=\frac{1}{2}\int_{0}^{\ell}\brackets{\varepsilon_{33}A^p\dot{\Phi}^2+\gamma\braces{A^p v_{z_1}-I^p_0\phi_{z_1}}\dot{\Phi} }dz_1, \nonumber
		\end{align}
\end{subequations}
	where we made use of \eqref{ch3:eq:magneticflux_flow3}, \eqref{ch3:eq:TBT_equations} and \eqref{ch3:eq:crossseactionalA} and $\mathcal{M}$ for \ac{TBT} is the same as \eqref{ch3:eq:EBBTEnergy_Magnetic_scalar}.\\

Following Table \ref{ch3:tab:domainLagrangian} and considering \eqref{ch3:eq:magneticflux_flow3} as the generalized displacements coordinate for the electromagnetic part, we see that \eqref{ch3:eq:EBBTEnergy_Electric_scalar} and  \eqref{ch3:eq:TBTEnergy_Electric_scalar} can be regarded as the electric co-energies, for \ac{EBBT} and \ac{TBT} respectively. Furthermore, we have the magnetic energy \eqref{ch3:eq:EBBTEnergy_Magnetic_scalar} for the electromagnetic domain, which is not influenced by the beam theory. Therefore, we have a combined Lagrangian for the novel current-controlled piezoelectric actuator, composed of the mechanical Lagrangian and electromagnetic co-Lagrangian, as follows
\begin{align}\label{ch3:eq:Lagrangian}
    \mathcal{L}_j &=  (T^{s,\ast_j} + T^{p,\ast_j)} - (V^s_j +V^p_j) +  \mathcal{E}^\ast_j -  \mathcal{M}\nonumber \\
    &= (T^\ast_j + \mathcal{E}^\ast_j) - (V_j + \mathcal{M}),
\end{align}
\blue{where the non-energetic coupling terms are present in $V^p_j$ and $\mathcal{E}^\ast_j$.} To derive the dynamical models for the piezoelectric actuators using \ac{EBBT} and \ac{TBT}, we make use of \eqref{ch3:eq:Lagrangian}.\\

The current source $\mathcal{I}(t)$ enters the system through the external work term $W$
	\begin{align}\label{ch3:eq:workterm_current}
		\begin{split}
		W &=-\int_{\Omega}\sigma_a E_3 d\Omega= \int_{\Omega}\sigma_a \dot{\Phi} d\Omega\\
		&=-\int_{\Omega}\dot{\sigma}_a\Phi d\Omega=-\int_{\Omega}\frac{1}{A_q}\mathcal{I}(t)\Phi d\Omega\\
		&=-\int_0^\ell (h_b-h_a)\mathcal{I}(t)\Phi dz_1, 
	\end{split}
	\end{align}
where we made use of \eqref{ch3:eq:magneticfluxexpressions} and integration by parts on the standard work term of voltage actuated piezoelectric actuators \cite{tiersten1969linear,IEEEstandardPiezo,smith2005smart}. \\

The mathematical models of the voltage actuated piezoelectric actuator can be derived by applying Hamilton's principle to a definite integral that contains the Lagrangian $\mathcal{L}$ and the impressed forces $W$ that allows the actuation of the piezoelectric actuator by means of an applied current. Lets define the definite integral 
\begin{align}\label{ch3:eq:def_integrap_heli}
    \mathcal{J}_j& :=\int_{t_{0}}^{t_{1}} \mathcal{L}_j + W ~d t.
\end{align}
Furthermore, define the coefficients
    	\begin{eqnarray}\label{ch3:eq:new_energycoefficients}
        		&\rho_A:=\rho^p A^p+\rho^s A^s, \quad &C_A:=C^{p,E} A^p+C^{s,E} A^s, \nonumber \\
        		&\rho_I:=\rho^p I^p+\rho^s I^s, \quad  &C_I:=C^{p,E} I^p+C^{s,E} I^s, \nonumber  \\
        		&\rho_{I_0}:=\rho^p I^p_0,  \quad &C_{I_0}:=C^p_{11}I^p_0, \nonumber\\
        		&G_A :=G^s A^s + G^p A^p 
    	\end{eqnarray} 
such that we can write the definite integrals \eqref{ch3:eq:def_integrap_heli} for \ac{EBBT} and \ac{TBT}, composed of the energies \eqref{ch3:eq:EBBTenergies_all_voluit} and \eqref{ch3:eq:TBTenergies_all_voluit} of the perfectly bonded piezoelectric actuator and mechanical substrate and external-work \eqref{ch3:eq:workterm_current} respectively as follows,
\begin{align} \label{ch3:eq:def_integral_EBBT}
        \mathcal{J}_{EB}&=\int^{t_1}_{t_0}\frac{1}{2}\int_0^\ell \left[\braces{\rho_A\dot{v}^2-2\rho_{I_0}\dot{v}\dot{w}_{z_1}+\rho_I\dot{w}^2_{z_1}+\rho_A\dot{w}^2} \right. \nonumber\\
         &+\left. \varepsilon_{33}A^p\dot{\Phi}^2+2\gamma\braces{A^pv_{z_1}-I^p_0w_{z_1z_1}}\dot{\Phi}\right.\nonumber \\
        &  \left. -\braces{C_Av_{z_1}^2-C_I w_{z_1z_1}^2-2C_{I_0}v_{z_1}w_{z_1z_1}} \right. \nonumber \\
       & \left. - \frac{1}{\mu}A^p\Phi_{z_1}^2 - 2 (h_2-h_1) \mathcal{I}(t)\Phi \right] dz_1 ~dt,
\end{align}
and
\begin{align} \label{ch3:eq:def_integral_TBT}
          \mathcal{J}_{T}&=\int^{t_1}_{t_0}\frac{1}{2}\int_0^\ell \left[\braces{\rho_A\dot{v}^2-2\rho_{I_0}\dot{v}\dot{\phi}+\rho_I\dot{\phi}^2+\rho_A\dot{w}^2} \right. \nonumber\\
         &+\left. \varepsilon_{33}A^p\dot{\Phi}^2+2\gamma\braces{A^pv_{z_1}-I^p_0\phi_{z_1}}\dot{\Phi}\right. \nonumber\\
        &  \left. -\braces{C_Av_{z_1}^2+C_I\phi_{z_1}^2-2C_{I_0}v_{z_1}\phi_{z_1}} \right. \nonumber \\
    &    \left. -\frac{1}{2} G_A \braces{w_{z_1}^2-2w_z{z_1}\phi+\phi^2} \right. \nonumber \\
       & \left. - \frac{1}{\mu}A^p\Phi_{z_1}^2 -   2(h_2-h_1) \mathcal{I}(t)\Phi \right] dz_1 ~dt,
\end{align}
\blue{where the non-energetic coupling terms can be recognized as they contain the multiplication of variables originating from both the mechanical and electromagnetic domains.}

\textbf{\ac{EBBT} model:}\\
Application of Hamilton's principle \cite{lanczos1970variational} to \eqref{ch3:eq:def_integral_EBBT} and setting the variation of admissible displacements $\setof{v,w_{z_1},\Phi}$ to zero, yields the  dynamical model describing the behaviour of a voltage actuated piezoelectric actuator with fully dynamic piezoelectric actuator with \ac{EBBT} as follows,
	\begin{align}\label{ch3:eq:PDE_currentactuator_EBBT_FD}
    	\rho_A\ddot{v}-\rho_{I_0}\ddot{w}_{z_1}&=C_Av_{z_1z_1}-C_I w_{z_1z_1z_1}-\gamma A^p\dot{\Phi}_{z_1} \nonumber\\
    	\rho_I\ddot{w}_{z_1}-\rho_{I_0}\ddot{v}&=C_Iw_{z_1z_1z_1}-C_{I_0}v_{z_1z_1}+\gamma C_{I_0}\dot{\Phi}_{z_1} \nonumber\\
    	\epsilon_{33}A^p\ddot{\Phi}&=\frac{1}{\mu}A^p\Phi_{z_1z_1}-\gamma\braces{A^p\dot{v}_{z_1}-I^p_0\dot{w}_{z_1z_1}} \nonumber\\
    	&-\braces{h_2-h_1}\mathcal{I}(t),
    	\end{align}
    	on the spatial domain $z_1\in[0,\ell]$, with essential boundary conditions $v(0)=\dot{v}(0)=0$, $w_{z_1}(0)=\dot{w}_{z_1}(0)=0$, $\Phi(0)=\dot{\Phi}(0)=0$ and natural boundary conditions $C_A v_{z_1}(\ell)-C_{I_0}w_{z_1z_1}(\ell)-\gamma A^p \dot{\Phi}(\ell)=0$, $C_Iw_{z_1z_1}(\ell)-C_{I_0}v_{z_1}(\ell)+\gamma I^p_0  \dot{\Phi}(\ell)=0$, $\frac{1}{\mu}A^p\Phi_{z_1}(\ell)=0$. The total energy of the actuator is given by
    	\begin{align}\label{ch3:eq:Hamiltonian_current_FD_EBBT}
    	\mathcal{H}_{EB}(t)&=\frac{1}{2}\int_{0}^{\ell}\left[\rho_A\dot{v}^2+\rho_I\dot{w}_{z_1}^2-2\rho_{I_0}\dot{v}\dot{w}_{z_1}+\epsilon_{33}\dot{\Phi}^2\right. \nonumber\\ 
    	&\quad\left.+ C_Av_{z_1}^2+C_Iw_{z_1z_1}^2-2 C_{I_0}v_{z_1}w_{z_1z_1}\right. \nonumber\\
    	&\left.+\frac{1}{\mu}A^p \Phi_{z_1}^2\right]dz_1,
    	\end{align}
    	obtained by the Legendre transformation on the Lagrangian  \eqref{ch3:eq:Lagrangian}. \blue{In \eqref{ch3:eq:Hamiltonian_current_FD_EBBT}, the non-energetic coupling terms are absent as expected.}\\
     
\textbf{\ac{TBT} model:}\\    	
Applying Hamilton's principle \cite{lanczos1970variational} to \eqref{ch3:eq:def_integral_TBT} and setting the variation of admissible displacements $\setof{v,\phi,\Phi}$ to zero, yields the  dynamical model describing the behaviour of a voltage actuated piezoelectric composite with fully dynamic piezoelectric actuator with \ac{TBT} as follows,
	\begin{align}\label{ch3:eq:PDE_currentactuator_TBT_FD}
    	\rho_A\ddot{v}-\rho_{I_0}\ddot{\phi}&=C_A v_{z_1z_1}-C_{I_0}\phi_{z_1z_1}-\gamma A^p \dot{\Phi}_{z_1} \nonumber\\
    	\rho_I\ddot{\phi}-\rho_{I_0}\ddot{v}&=C_I\phi_{z_1z_1}-C_{I_0}v_{z_1z_1}+\gamma I_0^p \dot{\Phi}_{z_1} \nonumber\\
    	&\quad +\tfrac{1}{2}G_A\braces{w_{z_1}-\phi} \nonumber\\
    	\rho_A \ddot{w} & = \tfrac{1}{2}G_A\braces{w_{z_1z_1}-\phi_{z_1}}\nonumber \\
    	\varepsilon_{33}A^p \ddot{\Phi}&=\frac{1}{\mu}A^p\Phi_{z_1z_1}-\gamma\braces{A^p\dot{v}_{z_1}-I_0^p\dot{\phi}_{z_1}} \nonumber\\
    	&-\braces{h_2-h_1}\mathcal{I}(t),
    	\end{align}
    	on the spatial domain $z_1\in[0,\ell]$, with essential boundary conditions $v(0)=\dot{v}(0)=0$, $w(0)=\dot{w}(0)=0$, $\phi(0)=\dot{\phi}(0)=0$, $\Phi(0)=\dot{\Phi}(0)=0$ and natural boundary conditions $C_A v_{z_1}(\ell)-C_{I_0}\phi_{z_1}(\ell)-\gamma A^p \dot{\Phi}(\ell)=0$, $C_I\phi_{z_1}(\ell)-C_{I_0}v_{z_1}(\ell)+\gamma I_0^p  \dot{\Phi}(\ell)=0$, $G_A(w_{z_1}-\phi)=0$,  $\frac{1}{\mu}A\Phi_{z_1}(\ell)=0$. The total energy of the actuator is given by
    	\begin{align}\label{ch3:eq:Hamiltonian_current_FD_TBT}
    	\mathcal{H}_{T}(t)&=\frac{1}{2}\int_{0}^{\ell}\left[\rho_A\dot{v}^2+\rho_I\dot{\phi}^2-2\rho_{I_0}\dot{v}\dot{\phi}+ \rho_A \dot{w}^2 \right. \nonumber\\ 
    	&\quad\left. +\frac{1}{\beta}A^p\dot{\Phi}^2+ C_A v_{z_1}^2+C_I\phi_{z_1}^2-2C_{I_0}v_{z_1}\phi_{z_1}\right. \nonumber\\
    	&\left.\quad +\frac{1}{2}AG\braces{w_{z_1}^2-2w_{z_1}\phi+\phi^2}+\frac{1}{\mu}A^p \Phi_{z_1}^2\right]dz_1,
    	\end{align}
    	obtained by the Legendre transformation on the Lagrangian or by summating the energies \eqref{ch3:eq:TBTenergies_all_voluit}. \blue{Furthermore, as anticipated, in \eqref{ch3:eq:Hamiltonian_current_FD_TBT}, the non-energetic coupling terms are not present.}\\

      \blue{ The use of the combined Lagrangian makes the derivation of the fully dynamic electromagnetic current-controlled models \eqref{ch3:eq:PDE_currentactuator_EBBT_FD} and \eqref{ch3:eq:PDE_currentactuator_TBT_FD} possible. Furthermore, a physical interpretation can be given to the coupling components present in the Lagrangian which are not present in the total energies \eqref{ch3:eq:Hamiltonian_current_FD_EBBT} and \eqref{ch3:eq:Hamiltonian_current_FD_TBT}, through the existence of so-called \textbf{traditors} \cite{Duinkers1959Traditors}. In the case of the novel current actuated piezoelectric composites, we are dealing with a gyrator type traditor with gyrating constant $\gamma$ present in the combined Lagrangian $\mathcal{L}_k$ and also in \eqref{ch3:eq:def_integral_EBBT} and \eqref{ch3:eq:def_integral_TBT}, used for deriving the mathematical models.}\\

     \begin{remark}     
            The \ac{EBBT} model \eqref{ch3:eq:PDE_currentactuator_EBBT_FD} and corresponding total energy \eqref{ch3:eq:Hamiltonian_current_FD_EBBT} are easily obtained from the \ac{TBT} model by restricting the rotation to remain perpendicular to the neutral of the beam, i.e. by substituting the constraint $\phi= w_{z_1}$ in \eqref{ch3:eq:PDE_currentactuator_TBT_FD} and total energy \eqref{ch3:eq:Hamiltonian_current_FD_TBT}, respectively.
        \end{remark}

         The distributed current source $\mathcal{I}(t)$ in the proposed models act on the surface of the piezoelectric layer where the electrodes (with surface $A_q$) are located. The surface $A_q$ is in the $z_1z_2$ plane, see Fig \ref{ch3:fig:piezoelectriccomposite}, and the applied current acts on the normal of $A_q$, i.e. in the $z_3$ direction. Therefore, the current density allows a current input through
    \begin{align}\label{ch3:eq:current_actuation}
        J_3=\lim\limits_{A_q\to 0}A_q\mathcal{I}(t).
    \end{align}
    The external work \eqref{ch3:eq:workterm_current} is a direct consequence from Ampere's law \eqref{ch3:eq:Ampereslaw} which becomes evident by reducing \eqref{ch3:eq:Ampereslaw} for the piezoelectric actuator to obtain the scalar equation,    
	    \begin{align}\label{ch3:eq:maxamp_novel_scalar}
		\varepsilon_{33}\frac{\partial^2}{\partial t^2}\Phi&=\frac{1}{\mu}\frac{\partial^2}{\partial z_1^2}\Phi-\gamma\frac{\partial}{\partial t}(v_{z_1}-z_3w_{z_1z_1})-J_3,
	\end{align}
    where we made use of the magnetic flux density  \eqref{ch3:eq:magneticfluxexpressions} and \eqref{ch3:eq:magneticflux_flow3}. Combining \eqref{ch3:eq:maxamp_novel_scalar}, \eqref{ch3:eq:current_actuation} and  integrating both sides with respect to the cross-section $A$, see \eqref{ch3:eq:crossseactionalA}, we obtain the same expression as the third equation of \eqref{ch3:eq:PDE_currentactuator_EBBT_FD}, ensuring the validity of the electromagnetic part in \eqref{ch3:eq:PDE_currentactuator_EBBT_FD}, which is derived using \eqref{ch3:eq:workterm_current}. This can be done similarly for the \ac{TBT} model with the appropriate strain expression.\\

    In the next part we show the well-posedness of the novel fully dynamic electromagnetic current-controlled piezoelectric composites, derived using the Euler-Bernoulli and Timoshenko Beam Theory.
   
\section{Well-posedness of piezoelectric Composites}\label{ch3:SEC:well-posedness}
    In this section we show the well-posedness of the two derived novel current controlled piezoelectric composite models in the sense of semigroup theory \cite{CurtainZwart1995introduction}. More precisely, we define the associated operators of \eqref{ch3:eq:PDE_currentactuator_EBBT_FD} and \eqref{ch3:eq:PDE_currentactuator_TBT_FD} and show with use of the Lumer-Philips theorem \cite{lumer1961}, that both operators are in fact generators of a strongly continuous semigroup of contractions. 
      \begin{theorem}{\bf (Lumer-Phillips theorem \cite{lumer1961})}\label{ch3:theorem:Lumer-Phillips}
    The closed and densely defined operator $A$ generates a strongly continuous semigroup of contractions $T(t)$ on $X$, if and only if both $A$ and its adjoint $A^*$ are dissipative, i.e.
    \begin{align}\label{ch3:disspative_operator}
    \begin{array}{ccc}
        \angles{A \vect{x},\vect{x}}_X&\leq &0  \quad\text{ for all} \quad x\in\Dom(A), \\
        \angles{A^*{\vect{x}},\vect{x}}_X&\leq &0 \quad \text{ for all}  \quad x\in\Dom(A^\ast).
    \end{array}
    \end{align}
    \qedwhite
\end{theorem}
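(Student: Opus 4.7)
The plan is to reduce the statement to the Hille-Yosida generation theorem, which characterizes generators of $C_0$-semigroups of contractions on a Hilbert space $X$ as the closed, densely defined operators $A$ with $(0,\infty) \subset \rho(A)$ and $\|(\lambda I - A)^{-1}\| \leq 1/\lambda$. The two hypotheses, dissipativity of $A$ and of $A^*$, must therefore be translated into (i) the resolvent norm estimate and (ii) the range condition $\ran(\lambda I - A) = X$ for some (equivalently, every) $\lambda > 0$.

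For sufficiency, I would first exploit dissipativity of $A$: for $x \in \Dom(A)$ and $\lambda > 0$,
\begin{equation*}
\|(\lambda I - A)x\|^2 = \lambda^2 \|x\|^2 - 2\lambda \operatorname{Re}\angles{Ax,x} + \|Ax\|^2 \geq \lambda^2 \|x\|^2,
\end{equation*}
showing that $\lambda I - A$ is injective, admits a bounded inverse on its range with norm at most $1/\lambda$, and (since $A$ is closed) has closed range. To promote closed range to full range I would appeal to dissipativity of $A^*$: on a Hilbert space one has the orthogonality $\ran(\lambda I - A)^\perp = \ker(\lambda I - A^*)$, and the identical estimate applied to $A^*$ forces this orthogonal complement to be $\{0\}$. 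Density plus closedness then yields surjectivity, so $\lambda \in \rho(A)$ with the desired resolvent bound, and Hille-Yosida delivers the contraction semigroup $T(t)$.

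For necessity, assume $A$ generates a contraction semigroup $T(t)$. For $x \in \Dom(A)$ the map $t \mapsto \|T(t)x\|^2$ is non-increasing and differentiable from the right at $t=0$ with derivative $2\operatorname{Re}\angles{Ax,x}$, which immediately yields $\operatorname{Re}\angles{Ax,x} \leq 0$. For the adjoint, I would invoke the Hilbert-space fact that $T(t)^*$ is again a $C_0$-contraction semigroup with generator $A^*$, and rerun the same differentiation argument.

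The step I expect to require the most care is the passage from closed range to full range: it leans simultaneously on closedness of $A$ (to make the range closed in the first place) and on the Hilbert-space identification $\ran(\lambda I - A)^\perp = \ker(\lambda I - A^*)$. An alternative, cleaner route would be to invoke the classical Banach-space form of Lumer-Phillips, under which dissipativity of $A$ together with $\ran(I - A) = X$ already suffices, and then use dissipativity of $A^*$ solely to supply the range condition; this isolates the single nontrivial analytic point and keeps the rest of the argument purely formal.
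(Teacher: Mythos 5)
Your argument is correct, but note that the paper does not prove this statement at all: it is quoted as a classical result from the cited reference \cite{lumer1961} (the Hilbert-space form of Lumer--Phillips, as also found in \cite{CurtainZwart1995introduction}), so there is no in-paper proof to compare against. What you give is the standard textbook derivation, and it is sound: dissipativity of $A$ yields $\norm{(\lambda I-A)x}\geq\lambda\norm{x}$, hence injectivity, the resolvent bound $1/\lambda$, and (by closedness of $A$) closed range; the Hilbert-space identity $\ran(\lambda I-A)^{\perp}=\ker(\lambda I-A^{\ast})$ (valid because $A$ is densely defined) together with the same estimate for the dissipative $A^{\ast}$ kills the orthogonal complement, so $\lambda I-A$ is onto and Hille--Yosida applies; for necessity, differentiating $t\mapsto\norm{T(t)x}^{2}$ at $t=0$ gives dissipativity of $A$, and since on a Hilbert space $T(t)^{\ast}$ is again a strongly continuous contraction semigroup with generator $A^{\ast}$, the same computation handles the adjoint. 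Two small points of care: in a complex space the dissipativity condition in \eqref{ch3:disspative_operator} must be read as $\operatorname{Re}\angles{A\vect{x},\vect{x}}_X\leq 0$ (as you implicitly do), and your phrase ``density plus closedness yields surjectivity'' should be understood as: the range is dense because its orthogonal complement is trivial, and being closed it therefore equals $X$ --- the density of $\Dom(A)$ itself enters only in defining $A^{\ast}$ and invoking Hille--Yosida. Your alternative route through the Banach-space Lumer--Phillips theorem (dissipativity of $A$ plus the range condition, with $A^{\ast}$ used only to verify the range condition) is equally valid and is indeed how the result is usually organized.
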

        Let the length of the beam be $\ell=1$ and define $H^1_0(0,1):=\setof{f\in H^1(0,1)\mid f(0)=0}$, with $H^1(0,1)$ denoting the first order Sobolev space and let $L^2(0,1)$ denote the space of square integrable functions. Firstly, we show the well-posedness of the fully dynamic piezoelectric composite  with \ac{EBBT} \eqref{ch3:eq:PDE_currentactuator_EBBT_FD}. Subsequently, we show the well-posedness of the fully dynamic piezoelectric composite with \ac{TBT} \eqref{ch3:eq:PDE_currentactuator_TBT_FD}.
        
    \subsection{Well-posedness Piezoelectric composite with {EBBT}}\label{ch3:sec:wellposedEBBT}
        Inspired by \eqref{ch3:eq:Hamiltonian_current_FD_EBBT}, define the linear space
        	\begin{align*}
	\mathcal{X}_{EB}=\left\{\vect{x}\in H^1_0(0,1)\times H_0^1(0,1)\times H^1_0(0,1)\right.\\
	\left. L^2(0,1) \times L^2(0,1) \times L^2(0,1) \right\}
	\end{align*}
  	and inner product
	\begin{align}
	\begin{split}
	&\angles{\vect{x},\vect{y}}_\mathcal{X}:=\\
	&\quad \int_{0}^{1}\left[C_Ax_1'y_1'+C_I x_2'y_2'- C_{I_0}\braces{x_1'y_2'+x_2'y_1'}\right. \\
	&\quad+ \left. \frac{1}{\mu}	A^p x_3' z_3' +\rho_A {x}_4{y}_4 +\rho_  I{x}_5{y}_5 \right. \\
		&\quad \left.-\rho_{I_0}\braces{{x}_4{y}_5+{x}_5{y}_4}+\epsilon_{33} A^p {x_6}y_6\right]dz,
	\end{split}
	\end{align}
	where the prime indicate the spatial derivative with respect to $z_1$. The inner product $\angles{. ,.}_{\mathcal{X}_{EB}}$ induces the norm $\norm{\vect{x}}^2_{\mathcal{X}}=\angles{\vect{x},\vect{x}}_{\mathcal{X}_{EB}}=2 \mathcal{H}_{EB}(t)$ on $\mathcal{X}_{EB}$, see \eqref{ch3:eq:Hamiltonian_current_FD_EBBT}. For simplicity, denote the spatial variable $z:=z_1$, additionally let $\partial_{z}:=\frac{\partial}{\partial z}$, and define $\vect{x}:=\begin{bmatrix}v & w_{z} & \Phi & \dot{v} & \dot{w}_{z} & \dot{\Phi}	\end{bmatrix}^T$ to be the state, and current input $u(t)=\mathcal{I}(t)$. Furthermore, define the coefficients 
	\begin{equation}
	    \begin{aligned}
	        	a_{41}&=\frac{\rho_I C_A - \rho_{I_0} C_{I_0}}{\rho_A\rho_I-\rho_{I_0}^2}, \quad &a_{51}&=\frac{\rho_{A} C_{I_0}-\rho_{I_0} C_A  }{\rho_A\rho_I-\rho_{I_0}^2}, \nonumber  \\
            	a_{42}&=\frac{\rho_{I} C_{I_0}-\rho_{I_0} C_I }{\rho_A\rho_I-\rho_{I_0}^2}, \quad 	&a_{52}&=\frac{\rho_{A} C_I - \rho_{I_0} C_{I_0}}{\rho_A\rho_I-\rho_{I_0}^2},   \nonumber\\
            	a_{46}&=\gamma \frac{\rho_I A^p-\rho_{I_0}I^p_0}{\rho_A\rho_I-\rho_{I_0}^2}, \quad  	&a_{56}&=\gamma \frac{\rho_{A}I^p_0-\rho_{I_0} A^p}{\rho_A\rho_I-\rho_{I_0}^2}, \nonumber \\
            	a_{63}&= \frac{1}{\epsilon_{33}\mu},~ a_{64}=\gamma\frac{1}{\epsilon_{33}}, \quad &a_{65}&=\gamma\frac{I^p_0}{\epsilon_{33}A^p},  \label{ch3:eq:coefficientsA_EBBT}
	    \end{aligned}
	\end{equation}
    and denote the $n \times n$ identity operator by $\mathbf{I}_n$.

	Such that the operator corresponding to \eqref{ch3:eq:PDE_currentactuator_EBBT_FD} can be written as follows,	\begin{equation}\label{ch3:eq:opA_FD_EBBT}
    	\begin{gathered}
        	\mathcal{A}_{EB}:\Dom(\mathcal{A}_{EB})\subset \mathcal{X}_{EB}\rightarrow\mathcal{X},\\
            	\mathcal{A}_{EB}=\\
            	\left[
            	\begin{array}{ccc|ccc}
            	& && & \mathbf{I}_3 &\\ \hline
            	a_{41}\partial_{z}^2&-a_{42}\partial_{z}^2 & & & &-a_{46}\partial_{z}\\
            	-a_{51}\partial_{z}^2&a_{52} \partial_{z}^2& & & &a_{56}\partial_{z}\\
            	& & a_{63}\partial_{z}^2 &-a_{64} \partial_{z}&a_{65}\partial_{z}&
            	\end{array}
            	\right]
    	\end{gathered}
	\end{equation}
	with 
	\begin{align}\label{ch3:eq:domain_A_EBBT}
        &\Dom(\mathcal{A}_{EB})= \nonumber \\
        &\quad \left\{ \vect{x}\in \mathcal{X} \mid 	C_A x_1'(1)-C_{I_0}x_2'(1)  -\gamma A^p x_6(1)=0, \right.  \nonumber\\ 
	    &\quad\left. C_I x_2'(1)-C_{I_0}x_1'(1)+\gamma I^p_0  x_6(1)=0,\right.  \nonumber \\ 
	    &\quad\left.\frac{1}{\mu}A^p x_3'(1)=0.\right\}
	\end{align}
    is densely defined in $\mathcal{X}$. Denote the input $u(t)=\mathcal{I}(t)$ and define
    \begin{align}
        B=\begin{bmatrix}  0 & 0 &0&0&0& -\frac{1}{2g_b \varepsilon_{33}} \end{bmatrix}^T,
    \end{align}
    then, the behaviour of a current actuated piezoelectric composite with \ac{EBBT} is described by
      
        \begin{align*}
            \dot{\vect{x}}=\mathcal{A}_{EB}\vect{x}+B u(t).
        \end{align*}
    To establish the well-posedness of the operator \eqref{ch3:eq:opA_FD_EBBT} in the sense of semigroup theory, we set $u(t)= 0$ and make use of the following Lemma.
    
    \begin{lemma} \label{ch3:lem:skewadjoint_A_EBBT}
        The adjoint $\mathcal{A}_{EB}^\ast$ of the operator $\mathcal{A}_{EB}$, defined in \eqref{ch3:eq:opA_FD_EBBT}, is skew-adjoint. More precisely,
        \begin{equation}
            \mathcal{A}_{EB}^\ast=-\mathcal{A}_{EB}, \text{with}
            \Dom(\mathcal{A}_{EB}^\ast)=\Dom(\mathcal{A}_{EB})
        \end{equation}
    \end{lemma}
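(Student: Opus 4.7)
The plan is to establish skew-adjointness by verifying the two defining conditions in sequence. First, I would show that $\mathcal{A}_{EB}$ is formally skew-symmetric on $\Dom(\mathcal{A}_{EB})$, giving the inclusion $\mathcal{A}_{EB}^{\ast}\supseteq-\mathcal{A}_{EB}$. Second, I would show the reverse domain inclusion $\Dom(\mathcal{A}_{EB}^{\ast})\subseteq\Dom(\mathcal{A}_{EB})$, from which both equalities in the statement follow.

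For skew-symmetry, the approach is to pick arbitrary $\vect{x},\vect{y}\in\Dom(\mathcal{A}_{EB})$ and expand $\angles{\mathcal{A}_{EB}\vect{x},\vect{y}}_{\mathcal{X}_{EB}}+\angles{\vect{x},\mathcal{A}_{EB}\vect{y}}_{\mathcal{X}_{EB}}$ directly from the definitions of the inner product and the operator. The key structural observation is that the $2\times 2$ block acting on $(x_1'',x_2'')$ in the lower-left of $\mathcal{A}_{EB}$ equals $M^{-1}K$, where $M$ is the mass matrix (entries $\rho_A,-\rho_{I_0},-\rho_{I_0},\rho_I$) embedded in the velocity-part of the inner product and $K$ is the stiffness matrix (entries $C_A,-C_{I_0},-C_{I_0},C_I$) embedded in the displacement-part; the diagonal $a_{63}=1/(\mu\varepsilon_{33})$ plays the analogous role for the decoupled third channel. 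Pairing that lower block against $(y_4,y_5)$ through the $M$-weighted inner product collapses $M\cdot M^{-1}=\mathbf{I}_2$ and leaves $\int_0^1(y_4,y_5)K(x_1'',x_2'')^{T}\,dz$. A single integration by parts reduces this to $-\int_0^1(y_4',y_5')K(x_1',x_2')^{T}\,dz$ plus a boundary term at $z=1$; the interior integral is, using the symmetry $K=K^{T}$, the exact negative of the matching contribution $\int_0^1(x_4',x_5')K(y_1',y_2')^{T}\,dz$ coming from the upper $\mathbf{I}_3$ block of $\mathcal{A}_{EB}$, so these two bulk expressions cancel identically when summed with their $\vect{x}\leftrightarrow\vect{y}$ transposes. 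The same mechanism handles the $A^{p}/\mu$ coupling between the third and sixth components.

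The gyrator-type terms proportional to $\gamma$ (those carrying the entries $a_{46},a_{56},a_{64},a_{65}$) need a separate tally. In $\angles{\mathcal{A}_{EB}\vect{x},\vect{y}}+\angles{\vect{x},\mathcal{A}_{EB}\vect{y}}$ they regroup so that each pair $y_i x_j'+x_j y_i'$ collapses to the perfect derivative $(y_i x_j)'$, contributing only boundary values. The $z=0$ contributions vanish by the essential conditions on $\vect{x}$ and $\vect{y}$ (all six components are zero at $z=0$, inherited from $\mathcal{X}_{EB}$ via $H_0^{1}(0,1)$). The main obstacle is then showing that the surviving boundary terms at $z=1$ cancel. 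My plan is to substitute the three natural conditions $C_A x_1'(1)-C_{I_0}x_2'(1)=\gamma A^{p}x_6(1)$, $C_I x_2'(1)-C_{I_0}x_1'(1)=-\gamma I_0^{p}x_6(1)$, and $x_3'(1)=0$, together with the symmetric triple for $\vect{y}$, into the stiffness-block boundary residue. After simplification this residue should reduce to
\begin{equation*}
\gamma\setof{y_6(1)\brackets{A^{p}x_4(1)-I_0^{p}x_5(1)}+x_6(1)\brackets{A^{p}y_4(1)-I_0^{p}y_5(1)}},
\end{equation*}
which is precisely the negative of the gyrator boundary residue produced above, yielding full cancellation. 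This is the arithmetic crux of the proof, and it reflects the fact that the natural BCs were chosen to enforce energy conservation across the mechanical--electromagnetic interface through the traditor/gyrator structure of the coupling.

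For the reverse domain inclusion, I would let $\vect{y}\in\Dom(\mathcal{A}_{EB}^{\ast})$ and test the defining identity $\angles{\mathcal{A}_{EB}\vect{x},\vect{y}}=\angles{\vect{x},\mathcal{A}_{EB}^{\ast}\vect{y}}$ first against all $\vect{x}$ compactly supported in $(0,1)$. This identifies $\mathcal{A}_{EB}^{\ast}\vect{y}$ with $-\mathcal{A}_{EB}\vect{y}$ in the distributional sense and promotes the components of $\vect{y}$ to the Sobolev regularity required for the differential expression to act strongly. Relaxing $\vect{x}$ to admit arbitrary traces at $z=1$ and demanding that the resulting boundary residue vanish for every admissible choice then forces $\vect{y}$ to satisfy the same three natural BCs at $z=1$ as elements of $\Dom(\mathcal{A}_{EB})$; the essential BCs at $z=0$ are automatic from $\mathcal{X}_{EB}$. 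Combined with the skew-symmetry step, this yields $\mathcal{A}_{EB}^{\ast}=-\mathcal{A}_{EB}$ with $\Dom(\mathcal{A}_{EB}^{\ast})=\Dom(\mathcal{A}_{EB})$.
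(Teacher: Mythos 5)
Your proposal is correct, and its computational core is the same as the paper's: expand the energy inner product, integrate by parts, use the essential conditions at $z=0$ and the three natural conditions at $z=1$, and check that the stiffness-block boundary residue cancels against the gyrator terms. Your claimed residue $\gamma\setof{y_6(1)\brackets{A^{p}x_4(1)-I_0^{p}x_5(1)}+x_6(1)\brackets{A^{p}y_4(1)-I_0^{p}y_5(1)}}$ is exactly what the substitution of the natural conditions produces, and it is indeed the negative of the boundary contribution of the $\gamma$-coupling terms, so the cancellation goes through as you describe. Where you genuinely differ is in how the adjoint is pinned down: the paper performs the integration by parts to rewrite $\angles{\mathcal{A}_{EB}\vect{x},\vect{y}}_{\mathcal{X}_{EB}}$ as $\angles{\vect{x},-\mathcal{A}_{EB}\vect{y}}_{\mathcal{X}_{EB}}$ while simply imposing on $\vect{y}$ the same boundary conditions as on $\Dom(\mathcal{A}_{EB})$, which strictly establishes only the inclusion $-\mathcal{A}_{EB}\subseteq\mathcal{A}_{EB}^{\ast}$; the domain equality is asserted rather than derived. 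Your second step --- identifying $\mathcal{A}_{EB}^{\ast}\vect{y}=-\mathcal{A}_{EB}\vect{y}$ distributionally by testing against compactly supported $\vect{x}$ and then varying the traces at $z=1$ to force the natural conditions on $\vect{y}$ --- is the standard maximality argument that actually proves $\Dom(\mathcal{A}_{EB}^{\ast})\subseteq\Dom(\mathcal{A}_{EB})$, so your version delivers skew-adjointness rather than mere skew-symmetry and is in this respect more complete than the paper. One small correction: the vanishing of $x_4,x_5,x_6$ at $z=0$ is not inherited from $\mathcal{X}_{EB}$, where these components are only $L^2$; it follows from the graph condition $\mathcal{A}_{EB}\vect{x}\in\mathcal{X}_{EB}$, which places them in $H_0^1(0,1)$, and the analogous remark (via $\mathcal{A}_{EB}^{\ast}\vect{y}\in\mathcal{X}_{EB}$) supplies the $z=0$ conditions on the last three components of elements of $\Dom(\mathcal{A}_{EB}^{\ast})$; you should also note that the identification step uses nondegeneracy of the energy inner product, i.e.\ positive definiteness of the mass and stiffness blocks.
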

    \begin{proof}
    For any $\vect{x}=\begin{bmatrix}x_1 & \dots &x_6\end{bmatrix}^T$ and $\vect{y}=\begin{bmatrix}y_1 & \dots &y_6\end{bmatrix}^T$ $\in\Dom(\mathcal{A}_{EB})$ we have,
    \begin{subequations}
    \begin{align}
    \begin{split}
       & \langle \mathcal{A}_{EB} \vect{x}, \vect{v}\rangle_{\mathcal{X}}=
             \int_0^1 \braces{C_A x_4'-C_{I_0}x_5' }y_1'\\
            &\quad +\braces{C_I x_5'-C_{I_0}x_4' }y_2' + \braces{\frac{1}{\mu}A^p x_6'}y_3'\\
            &\quad + \braces{C_A x_1''-C_{I_0}x_2''-\gamma A^p x_6' }y_4\\
            &\quad +\braces{C_I x_2''-C_{I_0}x_1''+\gamma I_0^p x_6' }y_5\\
            &\quad + \left( \frac{1}{\mu}A^p x_3'' -\gamma\left(A^p x_4'-I_0^p x_5'\right)\right)y_6 ~dz  
    \end{split}\\
    \begin{split}
       &=\int_0^1 -\braces{C_A x_4-C_{I_0}x_5 }y_1''\\
            &\quad -\braces{C_I x_5-C_{I_0}x_4 }y_2'' - \braces{\frac{1}{\mu}A^p x_6}y_3''\\
            &\quad - \braces{C_A x_1'-C_{I_0}x_2'-\gamma A^p x_6 }y_4'\\
            &\quad -\braces{C_I x_2'-C_{I_0}x_1'+\gamma I_0^p x_6 }y_5'\\
            &\quad - \left( \frac{1}{\mu}A^p x_3' -\gamma\left(A^p x_4-I_0^p x_5\right)\right)y_6' ~dz \\
            &\quad + \brackets{ \braces{C_A x_4-C_{I_0}x_5 }y_1' }_0^1\\
            &\quad + \brackets{ \braces{C_I x_5-C_{I_0}x_4 }y_2' }_0^1\\ 
            &\quad + \brackets{  \braces{\frac{1}{\mu}A^p x_6}y_3' }_0^1\\ 
            &\quad + \brackets{ \braces{C_A x_1'-C_{I_0}x_2'-\gamma A^p x_6 }y_4 }_0^1\\
            &\quad + \brackets{ \braces{C_I x_2'-C_{I_0}x_1'+\gamma I_0^p x_6 }y_5 }_0^1\\ 
            &\quad + \brackets{ \braces{\frac{1}{\mu}x_3'-\gamma\braces{A^p x_4-I_0^p x_5}  }y_6 }_0^1   
            \end{split}\\
    \begin{split}
            & =   \int_0^1 -\braces{C_A y_4'-C_{I_0}y_5' }x_1'\\
            &\quad -\braces{C_I y_5'-C_{I_0}y_4' }x_2' + \braces{\frac{1}{\mu}A^p y_6'}x_3'\\
            &\quad - \braces{C_A y_1''-C_{I_0}y_2''-\gamma A^p y_6' }x_4\\
            &\quad -\braces{C_I y_2''-C_{I_0}y_1''+\gamma I_0^p y_6' }x_5\\
            &\quad - \left( \frac{1}{\mu}A^p y_3'' -\gamma\left(A^p y_4'-I_0^p y_5'\right)\right)x_6 ~dz \\
            &\quad + \brackets{ \braces{C_A y_1'-C_{I_0}y_2'-\gamma A^p y_6 }x_4 }_0^1\\
            &\quad + \brackets{ \braces{C_I y_2'-C_{I_0}y_1'+\gamma I_0^p y_6 }x_5 }_0^1\\ 
            &\quad + \brackets{ \braces{\frac{1}{\mu}y_3' }x_6 }_0^1\\ 
            &\quad + \brackets{ \braces{C_A x_1'-C_{I_0}x_2'-\gamma A^p x_6 }y_4 }_0^1\\
            &\quad + \brackets{ \braces{C_I x_2'-C_{I_0}x_1'+\gamma I_0^p x_6 }y_5 }_0^1\\ 
            &\quad + \brackets{ \braces{\frac{1}{\mu}x_3'  }y_6 }_0^1   
   \end{split}  \\
   \begin{split}
        & = \langle  \vect{u},-\mathcal{A}_{EB} \vect{v}\rangle_{\mathcal{X}_{EB}} = \langle  \vect{u},\mathcal{A}_{EB}^\ast \vect{v}\rangle_{\mathcal{X}_{EB}},\quad \quad 
        \end{split}
        \end{align}
        \end{subequations}
        
    where we made use of \ac{IBP}, the domain of $\mathcal{A}_{EB}$,  and imposed
    \begin{align} \begin{split}
        y_1(0)=y_2(0)=y_3(0)=0\\
        C_A y_1'(1)-C_{I_0}y_2'(1)  -\gamma A^p y_6(1)=0\\
        C_I y_2'(1)-C_{I_0}y_1'(1)+\gamma I^p_0  y_6(1)=0\\
        \frac{1}{\mu}A^p y_3'(1)=0
            \end{split}
    \end{align}
    for $\mathcal{A}_{EB}^\ast$, similarly to $\mathcal{A}_{EB}$. We have shown that 
    $\mathcal{A}_{EB}^\ast = -\mathcal{A}_{EB}$ and $\Dom(\mathcal{A}_{EB}^\ast) = \Dom(\mathcal{A}_{EB})$.  Hence, we conclude that $\mathcal{A}_{EB}$ is skew-adjoint.
   \end{proof}
    Now we are able to establish the well-posedness of the novel current actuated piezoelectric composite with \ac{EBBT} in the absence of control, i.e. $\mathcal{I}(t)=0$.
    
    \begin{theorem}
        The operator $\mathcal{A}_{EB}$, defined in \eqref{ch3:eq:opA_FD_EBBT}, generates a semigroup of contractions, satisfying $\norm{T(t)}\leq 1$ on $\mathcal{X}_{EB}$.
    \end{theorem}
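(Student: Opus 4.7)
The plan is to apply the Lumer--Phillips theorem (Theorem \ref{ch3:theorem:Lumer-Phillips}) directly, using Lemma \ref{ch3:lem:skewadjoint_A_EBBT} as the main input. Since the lemma establishes $\mathcal{A}_{EB}^\ast = -\mathcal{A}_{EB}$ on $\Dom(\mathcal{A}_{EB}^\ast) = \Dom(\mathcal{A}_{EB})$, skew-adjointness immediately yields the required dissipativity of both operators. The strategy is therefore short: verify the hypotheses of Lumer--Phillips (closed, densely defined, $\mathcal{A}_{EB}$ and $\mathcal{A}_{EB}^\ast$ dissipative), then read off the contraction bound.

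First I would note that $\Dom(\mathcal{A}_{EB})$ is dense in $\mathcal{X}_{EB}$, as already stated below \eqref{ch3:eq:domain_A_EBBT}. Closedness is automatic from Lemma \ref{ch3:lem:skewadjoint_A_EBBT}: a skew-adjoint operator coincides with $-\mathcal{A}_{EB}^\ast$, and adjoints of densely defined operators are always closed. The crucial computation is dissipativity: for any $\vect{x}\in\Dom(\mathcal{A}_{EB})$,
\begin{align*}
\angles{\mathcal{A}_{EB}\vect{x},\vect{x}}_{\mathcal{X}_{EB}}
= \angles{\vect{x},\mathcal{A}_{EB}^\ast \vect{x}}_{\mathcal{X}_{EB}}
= -\angles{\vect{x},\mathcal{A}_{EB}\vect{x}}_{\mathcal{X}_{EB}},
\end{align*}
so $\angles{\mathcal{A}_{EB}\vect{x},\vect{x}}_{\mathcal{X}_{EB}}=0\leq 0$ on the real Hilbert space $\mathcal{X}_{EB}$. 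The identical argument applied to $\mathcal{A}_{EB}^\ast=-\mathcal{A}_{EB}$ gives $\angles{\mathcal{A}_{EB}^\ast \vect{x},\vect{x}}_{\mathcal{X}_{EB}} = 0$, verifying condition \eqref{ch3:disspative_operator} for both $\mathcal{A}_{EB}$ and $\mathcal{A}_{EB}^\ast$.

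With all hypotheses of Theorem \ref{ch3:theorem:Lumer-Phillips} satisfied, I would conclude that $\mathcal{A}_{EB}$ generates a strongly continuous semigroup of contractions $T(t)$ on $\mathcal{X}_{EB}$, so $\norm{T(t)}\leq 1$. As a sanity check tying back to the physical picture, the recovered equality $\angles{\mathcal{A}_{EB}\vect{x},\vect{x}}_{\mathcal{X}_{EB}}=0$ is exactly $\frac{d}{dt}\mathcal{H}_{EB}(t)=0$ along trajectories of the unforced system, reflecting conservation of total energy \eqref{ch3:eq:Hamiltonian_current_FD_EBBT} in the absence of dissipation or input.

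There is essentially no obstacle left once Lemma \ref{ch3:lem:skewadjoint_A_EBBT} is in hand; the only point requiring mild care is making sure closedness is invoked correctly (via the general fact that adjoints are closed) rather than verified by an ad hoc argument. The whole proof therefore amounts to one display and a citation of Lumer--Phillips.
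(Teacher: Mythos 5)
Your proposal is correct and follows essentially the same route as the paper: both rest on Lemma \ref{ch3:lem:skewadjoint_A_EBBT} together with the Lumer--Phillips theorem, concluding that $\mathcal{A}_{EB}$ and $\mathcal{A}_{EB}^\ast$ are dissipative (in fact $\angles{\mathcal{A}_{EB}\vect{x},\vect{x}}_{\mathcal{X}_{EB}}=0$) and hence that a contraction semigroup is generated. The only difference is cosmetic: the paper re-derives $\angles{\mathcal{A}_{EB}\vect{x},\vect{x}}_{\mathcal{X}_{EB}}=0$ by an explicit integration-by-parts computation using the boundary conditions in $\Dom(\mathcal{A}_{EB})$, whereas you obtain it (and closedness) abstractly from the skew-adjointness already established in the lemma, which is a legitimate and slightly more economical shortcut.
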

    \begin{proof}
    The closed and densely defined operator $\mathcal{A}$ satisfies    
    \begin{align} \label{ch3:eq:Adissipative_EBBT}
    \begin{split}
         &\langle\mathcal{A}_{EB}\vect{x},\vect{x}\rangle= \int_0^1 \braces{C_A x_4'-C_{I_0}x_5' }x_1'\\
            &\quad +\braces{C_I x_5'-C_{I_0}x_4' }x_2' + \braces{\frac{1}{\mu}A^p x_6'}x_3'\\
            &\quad + \braces{C_A x_1''-C_{I_0}x_2''-\gamma A^p {x}_6' }x_4\\
            &\quad +\braces{C_I x_2''-C_{I_0}x_1''+\gamma I_0^p {x}_6' }x_5\\
            &\quad + \left( \frac{1}{\mu}A^p x_3'' -\gamma\left(A^p x_4'-I_0^p x_5'\right)\right)x_6 ~dz  
    \end{split}\\
       \begin{split}
            & =\int_0^1 \braces{C_A x_4'-C_{I_0}x_5' }x_1'\\
            &\quad +\braces{C_I x_5'-C_{I_0}x_4' }x_2' - \braces{\frac{1}{\mu}A^p x_6}x_3''\\
            &\quad - \braces{C_A x_1'-C_{I_0}x_2'-\gamma A^p {x}_6 }x_4'\\
            &\quad -\braces{C_I x_2'-C_{I_0}x_1'+\gamma I_0^p {x}_6 }x_5'\\
            &\quad + \left( \frac{1}{\mu}A^p x_3'' -\gamma\left(A^p x_4'-I_0^p x_5'\right)\right)x_6 ~dz  \\
            &\quad + \brackets{ \braces{\frac{1}{\mu}x_3' }x_6 }_0^1\\ 
            &\quad + \brackets{ \braces{C_A x_1'-C_{I_0}x_2'-\gamma A^p {x}_6 }x_4 }_0^1\\
            &\quad + \brackets{ \braces{C_I x_2'-C_{I_0}x_1'+\gamma I_0^p {x}_6 }x_5 }_0^1\\ 
            &=0
       \end{split}
    \end{align}
    where we made of \ac{IBP} and the domain of $\mathcal{A}_{EB}$. Furthermore, for the adjoint of $\mathcal{A}_{EB}$, we have that
    \begin{align}\label{ch3:eq:A_ast_dissipative_EBBT}
        \langle \mathcal{A}_{EB}^\ast \vect{x},\vect{x}\rangle =  \langle \mathcal{-A}_{EB} \vect{x},\vect{x}\rangle = - \langle \mathcal{A}_{EB} \vect{x},\vect{x}\rangle = 0,
    \end{align}
    where we made use of Lemma \ref{ch3:lem:skewadjoint_A_EBBT} and \eqref{ch3:eq:Adissipative_EBBT}.\\
    This shows that both $\mathcal{A}_{EB}$ and $\mathcal{A}_{EB}^\ast$ satisfy the dissipative properties of a semigroup of contractions. Therefore, with the use of Theorem \ref{ch3:theorem:Lumer-Phillips}, we conclude that the operator \eqref{ch3:eq:opA_FD_EBBT} is a generator of a strongly continuous semigroup and is well-posed.
    \end{proof}
    
     \subsection{Well-posedness Piezoelectric composite with {TBT}}\label{ch3:sec:wellposedTBT}
        We follow the same procedure as in Section \ref{ch3:sec:wellposedEBBT}, but now for the piezoelectric composite with \ac{TBT}. Inspired by \eqref{ch3:eq:Hamiltonian_current_FD_TBT}, define the linear space
        	\begin{align*}
	\mathcal{X}_{T}=\left\{\vect{x}\in H^1_0(0,1)\times H^1_0(0,1)\times H_0^1(0,1)\times H^1_0(0,1)\right.\\
	\left. \times L^2(0,1) \times L^2(0,1) \times L^2(0,1) \times L^2(0,1) \right\}
	\end{align*}
  	and inner product
	\begin{align}
	\begin{split}
	&\angles{\vect{x},\vect{y}}_{\mathcal{X}_{T}}:=\\
	&\quad \int_{0}^{1}\left[C_Ax_1'y_1'+C_I x_2'y_2'- C_{I_0}\braces{x_1'y_2'+x_2'y_1'}\right. \\
	&\quad+ \left.  \tfrac{1}{2}G_A (x_3' y_3' + x_2 y_2 -(x_3'y_2 +x_2 y_3') ) \right. \\
	&\quad+ \left.  \frac{1}{\mu}	A^p x_4' z_4' +\rho_A {x}_5{y}_5 + {x}_7{y}_7+\rho_ I{x}_6{y}_6 \right. \\
		&\quad \left.-\rho_{I_0}\braces{{x}_5{y}_6+{x}_6{y}_5} + \rho_A{x}_7{y}_7+\epsilon_{33} A^p {x_8}y_8\right]dz,
	\end{split}
	\end{align}
	where the prime indicate the spatial derivative with respect to $z_1$. The inner product $\angles{. ,.}_\mathcal{X}$ induces the norm $\norm{\vect{x}}^2_{\mathcal{X}_{T}}=\angles{\vect{x},\vect{x}}_{\mathcal{X}_{T}}=2 \mathcal{H}_T(t)$ on $\mathcal{X}_{T}$, see \eqref{ch3:eq:Hamiltonian_current_FD_TBT}. For simplicity, denote the spatial variable $z:=z_1$, additionally let $\partial_{z}:=\frac{\partial}{\partial z}$, and define $\vect{x}:=\begin{bmatrix}v & \phi &  w & \Phi & \dot{v} & \dot{\phi} & \dot{w} & \dot{\Phi}	\end{bmatrix}^T$ to be the state. Furthermore, define the coefficients 
	\begin{equation}
	    \begin{aligned}
	        	a_{51}&=\frac{\rho_I C_A - \rho_{I_0} C_{I_0}}{\rho_A\rho_I-\rho_{I_0}^2}, \quad &a_{61}&=\frac{\rho_{A} C_{I_0}-\rho_{I_0} C_A  }{\rho_A\rho_I-\rho_{I_0}^2}, \nonumber  \\
            	a_{52}&=\frac{\rho_{I} C_{I_0}-\rho_{I_0} C_I }{\rho_A\rho_I-\rho_{I_0}^2}, \quad 	&a_{62}&=\frac{\rho_{A} C_I - \rho_{I_0} C_{I_0}}{\rho_A\rho_I-\rho_{I_0}^2},   \nonumber\\
            	a_{53}&= \frac{\rho_{I_0}G_A}{2(\rho_A\rho_I-\rho_{I_0}^2)},\quad &a_{63}&= \frac{\rho_{A}G_A}{2(\rho_A\rho_I-\rho_{I_0}^2)} \nonumber\\
            	a_{58}&=\gamma \frac{\rho_I A^p-\rho_{I_0}I^p_0}{\rho_A\rho_I-\rho_{I_0}^2}, \quad  	&a_{68}&=\gamma \frac{\rho_{A}I^p_0-\rho_{I_0} A^p}{\rho_A\rho_I-\rho_{I_0}^2}, \nonumber \\
            	a_{7}&=\frac{G_A}{2 \rho_A}    , \quad &a_{83}&= \frac{1}{\epsilon_{33}\mu},    \nonumber\\
            a_{84}&=\gamma\frac{1}{\epsilon_{33}}, \quad &a_{85}&=\gamma\frac{I^p_0}{\epsilon_{33}A^p}  \label{ch3:eq:coefficientsA_TBT}
	    \end{aligned}
	\end{equation}

	Such that the operator corresponding to \eqref{ch3:eq:PDE_currentactuator_TBT_FD} can be written as follows,	
	\small{
	\begin{equation}\label{ch3:eq:opA_FD_TBT}
    	\begin{gathered}
        	\mathcal{A}_{T}:\Dom(\mathcal{A}_{T})\subset \mathcal{X}_{T}\rightarrow\mathcal{X}_{T},\\
            	\mathcal{A}_{T}=\left[\begin{array}{cc}
            	      & \mathbf{I}_4 \\
            	    \bar{A}_a & \bar{A}_b
            	\end{array} \right]\\
    	\end{gathered}
	\end{equation}}
	with 
    	\begin{align}
    	   {A}_a&=\left[\begin{array}{cccc}
    	        a_{51}\partial_{z}^2 & -a_{52a}\partial_{z}^2- a_{53}  &  a_{53}\partial_{z} & \\
                	-a_{61}\partial_{z}^2 & a_{62a}\partial_{z}^2- a_{63} & a_{63}\partial_{z}&  \\
                	& -a_7 \partial_{z} & a_7 \partial_{z}^2 &   \\
                	& & & a_{84}\partial_{z}^2
    	    \end{array} \right], \\
        	 {A}_b&=\left[\begin{array}{cccc}
        	    & & &-a_{58}\partial_{z}\\
        	    & & &a_{68}\partial_{z} \\
        	     & &   \\
        	     -a_{85} \partial_{z}&a_{86}\partial_{z}&
        	   \end{array} \right],
        \end{align}
	and
	
	\begin{align}\label{ch3:eq:domain_A_TBT}
        &\Dom(\mathcal{A}_{T})= \nonumber \\
        &\quad \left\{ \vect{x}\in \mathcal{X} \mid 	C_A x_1'(1)-C_{I_0}x_2'(1)  -\gamma A^p x_6(1)=0, \right.  \nonumber\\ 
	    &\quad\left. C_I x_2'(1)-C_{I_0}x_1'(1)+\gamma I^p_0  x_6(1) =0,\right.  \nonumber \\ 
	    &\quad\left. \frac{G_A}{2}(x_3'(1)-x_2(1)) ,\quad \frac{1}{\mu}A^p x_3'(1)=0.\right\}
	\end{align}
 
    is densely defined in $\mathcal{X}_{T}$. Denote the input $u(t)=\mathcal{I}(t)$ and define
    \begin{align}
        B_{T}=\begin{bmatrix}  0 & 0 &0&0&0&0&0 -\frac{1}{2g_b \varepsilon_{33}} \end{bmatrix}^T,
    \end{align}
    then, the behaviour of a current actuated piezoelectric composite with \ac{TBT} is described by
     describes 
        \begin{align*}
            \dot{\vect{x}}=\mathcal{A}_{T}\vect{x}+B_{T} u(t).
        \end{align*}
    To establish the well-posedness of the operator \eqref{ch3:eq:opA_FD_EBBT} in the sense of semigroup theory, we set $u(t)= 0$ and make use of the following Lemma.
    
     \begin{lemma} \label{ch3:lem:skewadjoint_A_TBT}
         The adjoint $\mathcal{A}_{T}^\ast$ of the operator $\mathcal{A}_{T}$, defined in \eqref{ch3:eq:opA_FD_TBT}, is skew-adjoint.
    \end{lemma}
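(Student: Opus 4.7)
The plan is to mirror the strategy of Lemma \ref{ch3:lem:skewadjoint_A_EBBT}: compute $\angles{\mathcal{A}_T \vect{x},\vect{y}}_{\mathcal{X}_T}$ for arbitrary $\vect{x},\vect{y}\in \Dom(\mathcal{A}_T)$, then apply integration by parts in $z$ until every spatial derivative has been shifted from $\vect{x}$ onto $\vect{y}$, and verify that the resulting boundary contributions vanish provided $\vect{y}$ satisfies the same essential and natural boundary conditions as $\vect{x}$. This will simultaneously identify $\mathcal{A}_T^\ast = -\mathcal{A}_T$ and $\Dom(\mathcal{A}_T^\ast) = \Dom(\mathcal{A}_T)$.

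The inner product and the action of $\mathcal{A}_T$ naturally split into three blocks, which I would handle in sequence. The stretching/bending block in the components $1,2,5,6$ is structurally identical to the EBBT calculation; one integration by parts on each of the four second-order terms $C_A x_1''\,y_5,\ C_{I_0}x_2'' y_5,\ C_I x_2''\, y_6,\ C_{I_0}x_1''\,y_6$ reproduces the EBBT interior cancellations and yields boundary terms of the form $(C_A x_1'-C_{I_0}x_2')\,y_5$ and $(C_I x_2'-C_{I_0}x_1')\,y_6$ at $z=1$. The electromagnetic block in components $4,8$ generates the magnetic interior contribution $\tfrac{1}{\mu}A^p x_4'' y_8$ together with the gyrator-traditor cross-terms $-\gamma(A^p x_5'-I_0^p x_6')\,y_8$ and their mirror partners $-\gamma A^p x_8'\,y_5 + \gamma I_0^p x_8'\,y_6$; one integration by parts on each produces the skew-symmetric interior pairing plus boundary pieces $\tfrac{1}{\mu}A^p x_4' y_8$, $-\gamma A^p x_8 y_5$, and $\gamma I_0^p x_8 y_6$ at $z=1$.

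The genuinely new and most delicate ingredient, and the main obstacle, is the Timoshenko shear block coupling the components $2,3,6,7$. The operator places $-\tfrac{1}{2}G_A(x_3'-x_2)$ into the $\dot\phi$-row (through $a_{53}$) and $\tfrac{1}{2}G_A(x_3''-x_2')$ into the $\dot w$-row (through $a_7$), while the inner product already contains the weight $\tfrac{1}{2}G_A\bigl(x_3'y_3'+x_2 y_2-x_3'y_2-x_2 y_3'\bigr)$. The cleanest route is to treat this block as a single unit: integrating $x_3'' y_7 \to -x_3' y_7' + [x_3' y_7]_0^1$ and $x_2' y_7 \to -x_2 y_7' + [x_2 y_7]_0^1$, and reshuffling the pairings $x_3' y_2$ and $x_2 y_3'$ symmetrically on the other side, one finds that all interior contributions reassemble into the mirror expression $\tfrac{1}{2}G_A(y_3'-y_2)$ paired against $x_7$ and $x_6$, leaving the single boundary combination $\tfrac{1}{2}G_A\bigl(x_3'(1)-x_2(1)\bigr)y_7(1)$ together with its mirror $\tfrac{1}{2}G_A\bigl(y_3'(1)-y_2(1)\bigr)x_7(1)$ at $z=1$.

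All boundary contributions at $z=0$ vanish automatically by the Dirichlet conditions $x_i(0)=y_i(0)=0$ for $i=1,\dots,4$ built into $\mathcal{X}_T$. At $z=1$, imposing on $\vect{y}$ the four natural conditions of \eqref{ch3:eq:domain_A_TBT}, namely $C_A y_1'(1)-C_{I_0}y_2'(1)-\gamma A^p y_8(1)=0$, $C_I y_2'(1)-C_{I_0}y_1'(1)+\gamma I_0^p y_8(1)=0$, $\tfrac{1}{2}G_A(y_3'(1)-y_2(1))=0$, and $\tfrac{1}{\mu}A^p y_4'(1)=0$, every remaining boundary term either multiplies a factor vanishing on $\vect{x}$ via its defining conditions, or multiplies a factor vanishing on $\vect{y}$ via the conditions just imposed. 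Since those conditions on $\vect{y}$ are precisely the defining conditions of $\Dom(\mathcal{A}_T)$, we conclude $\Dom(\mathcal{A}_T^\ast)=\Dom(\mathcal{A}_T)$ and the shifted pairing equals $\angles{\vect{x},-\mathcal{A}_T\vect{y}}_{\mathcal{X}_T}$, giving $\mathcal{A}_T^\ast=-\mathcal{A}_T$ as required.
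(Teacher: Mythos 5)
Your proposal follows essentially the same route as the paper's proof: a direct computation of $\angles{\mathcal{A}_T\vect{x},\vect{y}}_{\mathcal{X}_T}$, integration by parts to shift all spatial derivatives onto $\vect{y}$, cancellation of the boundary terms at $z=0$ by the Dirichlet conditions and at $z=1$ by imposing on $\vect{y}$ exactly the natural conditions of $\Dom(\mathcal{A}_T)$ (including the shear condition $\tfrac{1}{2}G_A(y_3'(1)-y_2(1))=0$), yielding $\mathcal{A}_T^\ast=-\mathcal{A}_T$ with $\Dom(\mathcal{A}_T^\ast)=\Dom(\mathcal{A}_T)$. Your block-by-block organization (bending, electromagnetic, shear) is just a cleaner bookkeeping of the same calculation, and the boundary combinations you identify match those appearing in the paper's derivation.
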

    \begin{proof}
    For any $\vect{u}=\begin{bmatrix}x_1 & \dots &x_8\end{bmatrix}^T$ and $\vect{y}=\begin{bmatrix}y_1 & \dots &y_8\end{bmatrix}^T$ $\in\Dom(\mathcal{A}_{T})$ we have,
    \begin{subequations}
    \begin{align}
    \begin{split}
       & \langle \mathcal{A}_{T} \vect{u}, \vect{v}\rangle_{\mathcal{X}}=
             \int_0^1 \braces{C_A x_5'-C_{I_0}x_6' }y_1'\\
            &\quad +\braces{C_I x_6'-C_{I_0}x_5' }y_2' +\frac{1}{2}G_A(x_7'-x_6)y_3'\\
            &\quad  -\frac{1}{2}G_A(x_7'-x_6)y_2+ \braces{\frac{1}{\mu}A^p x_8'}y_4'\\
            &\quad + \braces{C_A x_1''-C_{I_0}x_2''-\gamma A^p x_8' }y_5\\
            &\quad +\braces{C_I x_2''-C_{I_0}x_1''+\frac{1}{2}G_A(x_3'-x_2)+\gamma I_0^p x_8' }y_6\\
            &\quad + \frac{1}{2}G_A(x_3''-x_2')y_7\\
            &\quad + \left( \frac{1}{\mu}A^p x_4'' -\gamma\left(A^p x_5'-I_0^p x_6'\right)\right)y_8 ~dz  
    \end{split}\\
    \begin{split}
       &=\int_0^1 -\braces{C_A x_5-C_{I_0}x_6 }y_1''\\
            &\quad -\braces{C_I x_6-C_{I_0}x_5 }y_2'' -\frac{1}{2}G_A x_7y_3'' - \frac{1}{2}G_A x_6y_3' \\
            &\quad  +\frac{1}{2}G_Ax_7y_2' +\frac{1}{2}G_A x_6y_2- \braces{\frac{1}{\mu}A^p x_8}y_4''\\
            &\quad - \braces{C_A x_1'-C_{I_0}x_2'-\gamma A^p x_8 }y_5'\\  
            &\quad -\braces{C_I x_2'-C_{I_0}x_1'+\gamma I_0^p x_8 }y_6'+\frac{1}{2}G_Ax_3'y_6\\
            &\quad - \frac{1}{2}G_A x_2y_6 - \frac{1}{2}G_A(x_3'-x_2)y_7'\\
            &\quad - \left( \frac{1}{\mu}A^p x_4' -\gamma\left(A^p x_5-I_0^p x_6\right)\right)y_8' ~dz  \\
            &\quad + \brackets{ \braces{C_A x_5-C_{I_0}x_6 }y_1' }_0^1 + \brackets{ \braces{C_I x_6-C_{I_0}x_5 }y_2' }_0^1\\
            &\quad + \brackets{\frac{1}{2}G_A x_7y_3'-\frac{1}{2}G_A x_7y_2}_0^1 +  \brackets{  {\frac{1}{\mu}A^p x_6}y_3' }_0^1 \\
            &\quad + \brackets{ \braces{C_A x_1'-C_{I_0}x_2'-\gamma A^p x_8 }y_5 }_0^1\\
            &\quad + \brackets{ \braces{C_I x_2'-C_{I_0}x_1'+\gamma I_0^p x_8 }y_6 }_0^1\\ 
            &\quad + \brackets{\frac{1}{2}G_A(x_3'-x_2)y_7}_0^1\\ 
            &\quad + \brackets{ \braces{\frac{1}{\mu}x_4'-\gamma\braces{A^p x_5-I_0^p x_5}  }y_8 }_0^1   
            \end{split}\\
    \begin{split}
            & =   -\braces{C_A y_5'-C_{I_0}y_6' }x_1'\\
            &\quad -\braces{C_I y_6'-C_{I_0}y_5' }x_2' -\frac{1}{2}G_A(y_7'-y_6)x_3'\\
            &\quad  +\frac{1}{2}G_A(y_7'-y_6)x_2- \braces{\frac{1}{\mu}A^p y_8'}x_4'\\
            &\quad - \braces{C_A y_1''-C_{I_0}y_2''-\gamma A^p y_8' }x_5\\
            &\quad -\braces{C_I y_2''-C_{I_0}y_1''+\frac{1}{2}G_A(y_3'-y_2)+\gamma I_0^p y_8' }x_6\\
            &\quad - \frac{1}{2}G_A(y_3''-y_2')x_7\\
            &\quad - \left( \frac{1}{\mu}A^p y_4'' -\gamma\left(A^p y_5'-I_0^p y_6'\right)\right)x_8 ~dz \\ 
            &\quad + \brackets{ \braces{C_A y_1'-C_{I_0}y_2'-\gamma A^p y_8 }x_5 }_0^1\\
            &\quad + \brackets{ \braces{C_I y_2'-C_{I_0}y_1'+\gamma I_0^p y_8 }x_6 }_0^1\\ 
            &\quad + \brackets{\frac{1}{2}G_A(y_3'-y_2)x_7}_0^1  + \brackets{ {\frac{1}{\mu}y_4' }x_8 }_0^1\\ 
            &\quad + \brackets{ \braces{C_A x_1'-C_{I_0}x_2'-\gamma A^p x_8 }y_5 }_0^1\\
            &\quad + \brackets{ \braces{C_I x_2'-C_{I_0}x_1'+\gamma I_0^p x_8 }y_6 }_0^1\\ 
            &\quad + \brackets{\frac{1}{2}G_A(x_3'-x_2)y_7}_0^1 + \brackets{ {\frac{1}{\mu}x_4'  }y_8 }_0^1   
   \end{split}  \\
   \begin{split}
        & = \langle  \vect{u},-\mathcal{A}_{T} \vect{v}\rangle_{\mathcal{X}_{T}} = \langle  \vect{u},\mathcal{A}_{T}^\ast \vect{v}\rangle_{\mathcal{X}_{T}},\quad \quad 
        \end{split}
        \end{align}
        \end{subequations}
    where we made use of \ac{IBP}, the domain of $\mathcal{A}_{T}$,  and imposed
    \begin{align} \begin{split}
        y_1(0)=y_2(0)=y_3(0)=v(4)=0\\
        C_A y_1'(1)-C_{I_0}y_2'(1)  -\gamma A^p y_8(1)=0\\
        C_I y_2'(1)-C_{I_0}y_1'(1)+\gamma I^p_0  y_8(1)=0\\
        \frac{1}{2}G_A(y_3'(1)-y_2(1))=0, \quad \frac{1}{\mu}A^p y_4'(1)=0
            \end{split}
    \end{align}
    for $\mathcal{A}_{T}^\ast$, similarly to $\mathcal{A}_{T}$. We have shown that $\mathcal{A}^\ast = -\mathcal{A}_{T}$ and, furthermore, $\Dom(\mathcal{A}_{T}^\ast) = \Dom(\mathcal{A}_{T})$. Hence, we conclude that $\mathcal{A}_{T}^\ast$ is skwe-adjoint.
   \end{proof}
   
    Now we are able to establish the well-posedness of the novel current actuated piezoelectric composite with \ac{TBT} in the absence of control, i.e. $\mathcal{I}(t)=0$.
    
     \begin{theorem}
        The operator $\mathcal{A}_{T}$, defined in \eqref{ch3:eq:opA_FD_TBT}, generates a strongly continuous semigroup of contractions, satisfying $\norm{T(t)}\leq 1$ on $\mathcal{X}_{T}$.
    \end{theorem}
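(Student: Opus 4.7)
The plan is to apply the Lumer-Phillips theorem (Theorem \ref{ch3:theorem:Lumer-Phillips}) to the operator $\mathcal{A}_T$ on the Hilbert space $\mathcal{X}_T$ equipped with the inner product induced by the total energy $\mathcal{H}_T$, in exact analogy with the EBBT proof. First, I would verify that $\mathcal{A}_T$ is closed and densely defined: density is standard since $\Dom(\mathcal{A}_T)$ contains a dense set of smooth functions satisfying the essential boundary conditions at $z=0$, and closedness follows from the local differential structure of the operator blocks $\bar A_a,\bar A_b$ together with the fact that the natural boundary conditions at $z=1$ are continuous functionals on the graph norm.

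Second, I would compute $\angles{\mathcal{A}_T \vect{x},\vect{x}}_{\mathcal{X}_T}$ for arbitrary $\vect{x}\in\Dom(\mathcal{A}_T)$ by expanding the inner product term by term and applying integration by parts to every second-order spatial differential operator. The bulk cancellations proceed in three groups: the axial-bending terms involving $C_A, C_I, C_{I_0}$ cancel pairwise against the momentum terms weighted by $\rho_A, \rho_I, \rho_{I_0}$ exactly as in the EBBT computation \eqref{ch3:eq:Adissipative_EBBT}; the electromagnetic terms involving $\frac{1}{\mu}A^p$ and $\varepsilon_{33}A^p$ cancel against the $\dot{\Phi}$ contributions; and the $\gamma$ gyrator/traditor terms linking the mechanical strains to the electromagnetic flow cancel in the bulk. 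The genuinely new contributions come from the shear block $\frac{1}{2}G_A(w_{z_1}^2 - 2w_{z_1}\phi + \phi^2)$: here IBP on the $w_{z_1 z_1}\dot w$ and $\phi_{z_1}\dot w$ terms produces bulk expressions $G_A w_{z_1}\dot w_{z_1}$ and $G_A \phi \dot w_{z_1}$ which cancel the $G_A(w_{z_1}-\phi)\dot\phi$ contribution from the second equation, together with matching cross terms.

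Third, I would verify that all boundary contributions at $z=0$ vanish by the essential boundary conditions in $\mathcal{X}_T$, and that the boundary terms at $z=1$ vanish by the natural boundary conditions in \eqref{ch3:eq:domain_A_TBT}: the two axial-bending natural conditions kill the axial/bending boundary terms involving $x_5,x_6$, the shear natural condition $\frac{1}{2}G_A(x_3'(1)-x_2(1))=0$ kills the new TBT boundary term coupling $x_7$ with shear, and the electromagnetic natural condition $\frac{1}{\mu}A^p x_4'(1)=0$ kills the remaining boundary term paired with $x_8$. This yields $\angles{\mathcal{A}_T \vect{x},\vect{x}}_{\mathcal{X}_T}=0$, and an appeal to Lemma \ref{ch3:lem:skewadjoint_A_TBT} immediately gives $\angles{\mathcal{A}_T^\ast \vect{x},\vect{x}}_{\mathcal{X}_T} = -\angles{\mathcal{A}_T \vect{x},\vect{x}}_{\mathcal{X}_T} = 0$. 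Both dissipativity conditions of Theorem \ref{ch3:theorem:Lumer-Phillips} are thus satisfied, producing the desired strongly continuous semigroup of contractions with $\norm{T(t)}\leq 1$.

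The main obstacle is the bookkeeping associated with the additional shear coupling introduced by TBT: one must carefully track the new $G_A$-terms generated by IBP and confirm that the shear natural boundary condition $\tfrac{1}{2}G_A(w_{z_1}-\phi)=0$ is precisely what is needed to annihilate the new boundary contributions without disturbing the cancellations already present in the EBBT case. Once this pattern is verified term by term, exactly as in the proof of Lemma \ref{ch3:lem:skewadjoint_A_TBT} (where the same IBP layout appears), the conclusion follows directly from the Lumer-Phillips theorem.
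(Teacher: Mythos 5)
Your proposal is correct and follows essentially the same route as the paper's proof: show $\angles{\mathcal{A}_T\vect{x},\vect{x}}_{\mathcal{X}_T}=0$ by integration by parts together with the essential conditions at $z=0$ and the natural conditions in $\Dom(\mathcal{A}_T)$ at $z=1$ (including the shear condition $\tfrac{1}{2}G_A(x_3'(1)-x_2(1))=0$), obtain $\angles{\mathcal{A}_T^\ast\vect{x},\vect{x}}_{\mathcal{X}_T}=0$ from the skew-adjointness Lemma \ref{ch3:lem:skewadjoint_A_TBT}, and conclude via the Lumer--Phillips theorem. No substantive difference from the paper's argument.
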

    \begin{proof}
    The closed and desnely defined operator $\mathcal{A}_{T}$ satisfies    
    \begin{align} \label{ch3:eq:Adissipative_TBT}
    \begin{split}
         &\langle\mathcal{A}_{T}\vect{x},\vect{x}\rangle= \int_0^1 \braces{C_A x_5'-C_{I_0}x_6' }x_1'\\
            &\quad +\braces{C_I x_6'-C_{I_0}x_5' }x_2' + \frac{1}{2}G_A(x_7'-x_6)x_3'\\
           &\quad  -\frac{1}{2}G_A(x_7'-x_6)x_2 + {\frac{1}{\mu}A^p x_8'}x_4'\\
            &\quad + \braces{C_A x_1''-C_{I_0}x_2''-\gamma A^p {x}_8' }x_5\\
            &\quad +\braces{C_I x_2''-C_{I_0}x_1''+\gamma I_0^p {x}_8' }x_6\\
            &\quad + \frac{1}{2}G_A(x_3'-x_2)x_6 + \frac{1}{2}G_A(x_3''-x_2')x_7 \\
            &\quad + \left( \frac{1}{\mu}A^p x_4'' -\gamma\left(A^p x_5'-I_0^p x_6'\right)\right)x_8 ~dz  
    \end{split}\\
       \begin{split}
            & =\int_0^1 \braces{C_A x_5'-C_{I_0}x_6' }x_1'\\
            &\quad +\braces{C_I x_6'-C_{I_0}x_5' }x_2' + \frac{1}{2}G_A(x_7'-x_6)x_3'\\
           &\quad  -\frac{1}{2}G_A(x_7'-x_6)x_2 - {\frac{1}{\mu}A^p x_8}x_4''\\
             &\quad - \braces{C_A x_1'-C_{I_0}x_2'-\gamma A^p {x}_8 }x_5'\\
            &\quad -\braces{C_I x_2'-C_{I_0}x_1'+\gamma I_0^p {x}_8 }x_6'\\
            &\quad +\frac{1}{2}G_A(x_3'-x_2)x_6 - \frac{1}{2}G_A(x_3'-x_2)x_7' \\
            &\quad + \left( \frac{1}{\mu}A^p x_4'' -\gamma\left(A^p x_5'-I_0^p x_6'\right)\right)x_8 ~dz \\
            &\quad + \brackets{ {\frac{1}{\mu}x_8 }x_4' }_0^1\\ 
            &\quad + \brackets{ \braces{C_A x_1'-C_{I_0}x_2'-\gamma A^p {x}_8 }x_5 }_0^1\\
            &\quad + \brackets{ \braces{C_I x_2'-C_{I_0}x_1'+\gamma I_0^p {x}_8 }x_6 }_0^1\\ 
            &\quad +\brackets{\frac{1}{2}G_A(x_3'-x_2)x_7}_0^1 \\
            &=0
       \end{split}
    \end{align}
    where we made of \ac{IBP} and the domain of $\mathcal{A}_{T}$. Furthermore, for the adjoint of $\mathcal{A}_{T}$, we have that
    \begin{align}\label{ch3:eq:A_ast_dissipative_TBT}
        \langle \mathcal{A}_{T}^\ast \vect{x},\vect{x}\rangle =  \langle \mathcal{-A}_{T} \vect{x},\vect{x}\rangle = - \langle \mathcal{A}_{T} \vect{x},\vect{x}\rangle = 0,
    \end{align}
    where we made use of Lemma \ref{ch3:lem:skewadjoint_A_TBT} and \eqref{ch3:eq:Adissipative_TBT}.\\
    This shows that both $\mathcal{A}_{T}$ and $\mathcal{A}_{T}^\ast$ satisfy the dissipative properties of a strongly continuous semigroup of contractions. Therefore, with the use of Theorem \ref{ch3:theorem:Lumer-Phillips}, we conclude that the operator \eqref{ch3:eq:opA_FD_TBT} is well-posed.
    \end{proof}

\section{Comparison with other current-controlled piezoelectric beam and composite models}\label{ch3:sec:comparison_beamandcomp}


        
        The approach for the novel current piezoelectric models presented here leads to the definition of the magnetic flux density and requires a combined Lagrangian to include the (co-)energies of the well-posed systems. More precisely, for the electromagnetic part, we couple the co-Lagrangian to the Lagrangian of the mechanical part through the non-energetic coupling terms known as traditors \cite{Duinkers1959Traditors}. \blue{These coupling terms are present in the combined Lagrangian, however, not in the Hamiltonian.} This approach is inspired by the treatment of Maxwell's equations for voltage-actuated piezoelectric beams, where the charge is defined by integrating the electric displacement, see for instance \cite{MenOSIAM2014}. Similarly, we obtain an expression for the magnetic flux by integrating the total magnetic field passing through the area. This approach deviates from two existing approaches to derive current actuated piezoelectric beams and composite models, i.e. the \textit{current-through-the-boundary} and the use of magnetic vector potentials. \\
        {These two other principles -to derive current actuated piezoelectric beams and composite are either from taking a charge actuated piezoelectric beam and adding a dynamical equation on the boundary \cite{OverviewpaperdeJong_arxiv} (Remark 2) or by utilizing the magnetic vector potentials $\Tilde{\vect{A}}$. The first approach results in a current-through-the-boundary type model by mathematically adding an integrator for the charge $Q$ on the boundary, i.e. $\mathcal{I}(t)=\tfrac{d}{dt}Q(t)$, see for instance \cite{OzerTAC2019} for a classical system description and \cite{VenSSIAM2014} for a \ac{pH} system description. In \cite{VenSSIAM2014}, the fully dynamic current actuated system exploits the \ac{pH} formalism utilizing the boundary ports, mimicking the charge integrator without incorporating the equation on the boundary. Physically, either case corresponds to the use of some electric circuitry. The charge and subsequently resulting current actuated systems have similar stabilizability properties, besides that current-through-the boundary type systems, which can utmost asymptotically stabilize the system due to the bounded input operator, see \cite{OzerTAC2019}.}\\        
        The approach using magnetic vector potentials $\Tilde{\vect{A}}$ is described in \cite{MenOCDC2014,OzerTAC2019}. As per Gauss's magnetic law \eqref{ch3:eq:GausssMagnetic}, there exist magnetic vector potentials such that
        \begin{align}\label{ch3:eq:magvectpot}
	        \vect{B}=\nabla\times \Tilde{\vect{A}}.
	    \end{align}
        Substituting \eqref{ch3:eq:magvectpot} into Faraday's law \eqref{ch3:eq:Faraday'sLaw} results in an expression of the electric field $\vect{E}$ with electric scalar potential $\varphi$ for a piezoelectric actuator as follows,
    	\begin{align}\label{ch3:eq:Efield}
    	    \vect{E}&=-\nabla \varphi-\frac{\partial }{\partial t}\tilde{\vect{A}},
    	\end{align}
    	and result in Max-Amperes law expressed in magnetic vector and scalar potentials, as follows
    	\begin{align}\label{ch3:eq:maxamp_magneticvec&salar}
            \mu^{-1}(\nabla (\nabla \cdot \tilde{\vect{A}})-\nabla^2  \tilde{\vect{A}}) = e^T \frac{\partial \epsilon_{11}}{\partial t} -\varepsilon \frac{\partial^2  \tilde{\vect{A}}}{\partial t^2}-\varepsilon \nabla \frac{\partial \varphi}{\partial t}+\vect{J}.	    
    	\end{align}
    	The electric scalar potential $\varphi$ and magnetic vector potentials $\tilde{\vect{A}}$ are not uniquely defined \cite{eom2013maxwell,MenOCDC2014}. Therefore, a gauge function, such as the Coulomb gauge $(\nabla\cdot \tilde{\vect{A}}=0)$ or the Lorentz gauge ($\nabla\cdot \tilde{\vect{A}} + {c}^{-2}\tfrac{\partial \varphi}{\partial t}=0$ is required to uniquely define $\varphi$ and $\tilde{\vect{A}}$ and solve \eqref{ch3:eq:maxamp_magneticvec&salar}, see for instance \cite{eom2013maxwell}. In the case of the Coulomb gauge, we obtain the equation for piezoelectric actuators as follows
    	\begin{align}\label{ch3:eq:maxamp_coulomb_scalar}
    	    \varepsilon_{33}\frac{\partial^2 \tilde{A}}{\partial t^2}&=\frac{1}{mu}\frac{\partial^2 \tilde{A}_3}{\partial z_1^2}+\gamma \frac{\partial \epsilon_{11}}{\partial t} -\varepsilon_{33} \frac{\partial^2 \varphi}{\partial z_3 t}+J_3,
    	\end{align}
    	where an elliptical needs to be solved for $\varphi$, see for instance \cite{OzerTAC2019,MenOMTNS2014,MenOCDC2014}. In the case of the Lorentz gauge, we obtain the equation for piezoelectric actuators
    	\begin{align}\label{ch3:eq:maxamp_lorentz_scalar}
    	    \varepsilon_{33}\frac{\partial^2 \tilde{A}_3}{\partial t^2}&=\frac{1}{mu}\frac{\partial^2 \tilde{A}_3}{\partial z_1^2}+\gamma \frac{\partial \epsilon_{11}}{\partial t} +J_3,
    	\end{align}
        	see for instance \cite{Ozer2020}. Although these gauge conditions do not influence the electric field $E$ and magnetic field $B$, they do have their specific characteristics \cite{eom2013maxwell} and influence the dynamic equations governing the piezoelectric actuator, see for instance \cite{MenOCDC2014,OzerTAC2019}. In \cite{OzerTAC2019}, it has been shown that the purely current actuated fully dynamic piezoelectric system, with two current sources, using electric vector potentials and a Coulomb gauge condition lack the stabilizability property. \blue{Here we would like to point out that the modelling approach proposed in this work circumvents the need for a gauge condition to derive a well-posed system.}
        	
            By comparison of \eqref{ch3:eq:maxamp_lorentz_scalar} and \eqref{ch3:eq:maxamp_novel_scalar}, we see that the method presented in this work, using the magnetic flux \eqref{ch3:eq:magneticflux_flow3}, shows similarities with the Lorentz gauge approach, i.e. $\Phi=-\tilde{A}_3$. However, the definition of the magnetic flux \eqref{ch3:eq:magneticflux_flow3} circumvents the necessity of the gauge condition and is analogous to the derivation of voltage-actuated piezoelectric actuators; see, for instance, \cite{MenOSIAM2014}. Furthermore, the variable $\Phi$ provides a clear physical interpretation \eqref{ch3:eq:magneticflux_flow3}. Therefore, the framework for modelling piezoelectric actuators, beams, and composites can be appended with the modelling approach presented here.
            
            In \cite{VenSSIAM2014}, a non-linear quasi-static current-controlled actuator and composite model is presented. It can be shown that the fully dynamic current-controlled actuator and composite presented in this work and the non-linear current-controlled quasi-static piezoelectric actuator/composite presented in \cite{VenSSIAM2014} are of the same nature. More precisely, by reducing the electromagnetic assumption \eqref{ch3:eq:PDE_currentactuator_TBT_FD} to the quasi-static situation, by letting $\frac{\partial}{\partial z}B_2(=\Phi_{z_1z_1})\rightarrow 0$  and linearize of the non-linear Timoshenko beam theory of the current actuated quasi-static piezoelectric system presented in \cite{VenSSIAM2014}, result in coinciding systems. Therefore, we conclude that the fully dynamic electromagnetic model derived in this work with the use of the magnetic flux \eqref{ch3:eq:magneticflux_flow3} yields a more extensive model by including the electromagnetic coupling than the quasi-static model in \cite{voss2011CDC}. Furthermore, in \cite{VenSSIAM2014}, it has been shown that the quasi-static current actuated model is not stabilizable. The stabilizability for the models \eqref{ch3:eq:PDE_currentactuator_EBBT_FD} and \eqref{ch3:eq:PDE_currentactuator_TBT_FD} is yet unanswered and will be addressed in the next section.\\
        

    
    \section{Feedback stabilization of piezoelectric composite}\label{ch3:sec:Stabz}
    
            To stabilize the fully dynamic current-controlled piezoelectric composite models \eqref{ch3:eq:PDE_currentactuator_EBBT_FD} and \eqref{ch3:eq:PDE_currentactuator_TBT_FD} we investigate a Lyapunov-based control strategy and make use of the following theorem for infinite dimensional systems.
                \begin{theorem}[LaSalle's Invariance Principle \cite{Luo1999SSIDSA}]\label{ch3:thm:LaSalleinv}
                    Let $\mathcal{V}$ be a continuous Lyapunov function for the strongly continuous semigroup $T(t)$ on $X$ and let the largest invariant subset be denoted as
                        $$\mathcal{W}:=\{\vect{x} \mid \tfrac{d}{dt}\mathcal{V}(\vect{x})=0\}. $$
                    If $x\in X$ and the orbits 
                        $$\tilde{\lambda}(\vect{x}) = \bigcup_{t \geq 0} T(t)\vect{x} $$
                     is precompact, then for the distance $d(.,.)$, we have that
                        $$\lim_{t\rightarrow\infty} d(T(t)\vect{x},\mathcal{W})=0.$$
                    Here, by invariance of $\mathcal{W}$ under $T(t)$, we mean $T(t)\mathcal{W}=\mathcal{W}$ for all $t\geq 0$. 
                    \qedwhite
                \end{theorem}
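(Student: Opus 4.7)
The plan is to prove this using the classical $\omega$-limit set argument adapted to strongly continuous semigroups on Banach spaces. First, I would introduce the $\omega$-limit set of $\vect{x}$,
\[
\omega(\vect{x}) := \bigcap_{s \geq 0}\overline{\bigcup_{t\geq s} T(t)\vect{x}},
\]
and establish its basic topological and dynamical properties. Since the orbit $\tilde{\lambda}(\vect{x})$ is precompact, its closure is compact, so the descending chain of non-empty closed subsets defining $\omega(\vect{x})$ has non-empty intersection by the finite intersection property; hence $\omega(\vect{x})$ is non-empty and compact. Invariance, $T(t)\omega(\vect{x}) = \omega(\vect{x})$, follows by taking a sequence $T(s_n)\vect{x} \to \vect{y} \in \omega(\vect{x})$ with $s_n \to \infty$ and applying strong continuity of $T(t)$ to deduce $T(t)\vect{y} = \lim_n T(t+s_n)\vect{x} \in \omega(\vect{x})$, combined with a preimage argument for the reverse inclusion.

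Next I would exploit the Lyapunov function. Because $\mathcal{V}$ is continuous and non-increasing along trajectories by assumption, and bounded below on the compact set $\overline{\tilde{\lambda}(\vect{x})}$, the limit $c := \lim_{t\to\infty}\mathcal{V}(T(t)\vect{x})$ exists. By continuity of $\mathcal{V}$ and the sequential characterisation of $\omega(\vect{x})$, the function $\mathcal{V}$ is identically equal to $c$ on all of $\omega(\vect{x})$. Now fix any $\vect{y} \in \omega(\vect{x})$. Invariance yields $T(t)\vect{y} \in \omega(\vect{x})$ for every $t\geq 0$, and so $\mathcal{V}(T(t)\vect{y}) \equiv c$. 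Evaluating the Lie derivative at $t = 0$ then gives $\tfrac{d}{dt}\mathcal{V}(\vect{y}) = 0$, placing $\vect{y} \in \mathcal{W}$. This shows $\omega(\vect{x}) \subseteq \mathcal{W}$.

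The final step is to convert this set inclusion into the asserted distance convergence. I would argue by contradiction: if there existed $\varepsilon > 0$ and a sequence $t_n \to \infty$ with $d(T(t_n)\vect{x},\omega(\vect{x})) \geq \varepsilon$, then by precompactness one could extract a convergent subsequence $T(t_{n_k})\vect{x} \to \vect{z}$; but $\vect{z}$ must lie in $\omega(\vect{x})$ by construction, contradicting the separation. Thus $d(T(t)\vect{x},\omega(\vect{x})) \to 0$, and the inclusion $\omega(\vect{x}) \subseteq \mathcal{W}$ yields the conclusion. The main obstacle I anticipate is the invariance of $\omega(\vect{x})$ in the infinite-dimensional setting: proving $\omega(\vect{x}) \subseteq T(t)\omega(\vect{x})$ requires producing a preimage under $T(t)$ of each $\omega$-limit point, which again relies on precompactness together with strong continuity to commute limits with semigroup action. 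A secondary delicacy is the transition from $\mathcal{V}(T(t)\vect{y}) \equiv c$ to $\tfrac{d}{dt}\mathcal{V}(\vect{y})=0$, where one must be precise about the meaning of the Lie derivative used in the definition of $\mathcal{W}$ (classical versus generalised), since $\vect{y}$ need not lie in the domain of the generator.
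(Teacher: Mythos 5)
This statement is LaSalle's Invariance Principle quoted from \cite{Luo1999SSIDSA}; the paper gives no proof of its own, and your sketch reproduces the standard $\omega$-limit-set argument from that reference correctly: non-emptiness, compactness and invariance of $\omega(\vect{x})$ via precompactness and strong continuity, constancy of $\mathcal{V}$ on $\omega(\vect{x})$, the inclusion $\omega(\vect{x})\subseteq\mathcal{W}$ (noting that $\mathcal{W}$ is to be read as the largest \emph{invariant} subset of $\{\vect{x}\mid\tfrac{d}{dt}\mathcal{V}(\vect{x})=0\}$, so the invariance of $\omega(\vect{x})$ you establish is exactly what is needed), and the contradiction argument for the distance convergence. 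You also correctly identify the two genuine delicacies, namely the preimage step in proving $\omega(\vect{x})\subseteq T(t)\omega(\vect{x})$ and the interpretation of $\tfrac{d}{dt}\mathcal{V}$ as a derivative along trajectories rather than via the generator, so the proposal is sound.
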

            Recall the corresponding energy functional $\mathcal{H}(t)$ in \eqref{ch3:eq:Hamiltonian_current_FD_EBBT} and \eqref{ch3:eq:Hamiltonian_current_FD_TBT}, where the change of these energies along the trajectories of \eqref{ch3:eq:PDE_currentactuator_EBBT_FD} as follows,
            \begin{align}\label{ch3:eq:Lyap_roc}
                \frac{d}{dt}\mathcal{H}(t)=-\int_0 ^1 (h_2-h_1)\dot{\Phi}(z) \mathcal{I}(t) ~dz = -\int_0^1 \kappa \dot{\Phi}^2(z)  ~dz
            \end{align}
            for control choice 
            \begin{align}\label{ch3:eq:control_choice}
                \mathcal{I}(t)=\tfrac{\kappa}{(h_2-h_1)} \dot{\Phi}(z), ~ \text{with }  \kappa >0 .
            \end{align}
             Then, $\mathcal{H}(t)$ can be considered as a Lyapunov candidate function to prove the asymptotic stability of the closed system. \\
            
        
            For the controlled piezoelectric composite with \ac{EBBT}, define $\vect{x}=\col(y_z,w_{zz},\Phi_z,\rho_A \dot{v}-\rho_{I_0}\dot{w}_z,\rho_I \dot{w}_z - \rho_{I_0}\dot{v}, \tfrac{A^p }{\varepsilon} \dot{\Phi}+\gamma (A^p v_z- I_0 w_{zz}))$ such that $x\in \mathcal{X}_\mathcal{H}$ ($L^2$-space) and denote the closed-loop system obtained via the control choice \eqref{ch3:eq:control_choice} as follows,
            {\begin{align}\label{ch3:eq:closedloop_sys}
                \dot{\vect{x}}&=\mathcal{A}\vect{x}, t>0 \\
                \vect{x}(0)&=\vect{x}_0 \in \Dom(\mathcal{A}) \nonumber
            \end{align}}
            with the closed and densely defined operator
           { \begin{align}\label{ch3:eq:closedloopgen_EBBT}
                    \mathcal{A}\vect{x}&=\begin{pmatrix} \tfrac{\rho_I}{\rho_A\rho_I-\rho_{I_0}^2}x_4' + \tfrac{\rho_{I_0}}{\rho_A\rho_I-\rho_{I_0}^2} x_5'\\ \tfrac{\rho_A}{\rho_A\rho_I-\rho_{I_0}^2} x_5' + \tfrac{\rho_{I_0}}{\rho_A\rho_I-\rho_{I_0}^2}x_4'    \\  \tfrac{1}{\varepsilon A^p} x_6' - \tfrac{\gamma}{\varepsilon}x_1'+\tfrac{\gamma I_0}{\varepsilon A^p}x_2'  \\
                    (C_A+\tfrac{\gamma^2 A^p}{\varepsilon})x_1'-(C_{I_0}+\tfrac{\gamma^2 I_0}{\varepsilon}))x_2'-\frac{\gamma}{\varepsilon} A_p x_6'\\
                    (C_I+\tfrac{\gamma^2 I_0^2}{\varepsilon A^p}))x_2'-(C_{I_0}+\tfrac{\gamma^2 I_0}{\varepsilon}))x_1'+\frac{\gamma}{\varepsilon} I_0 x_6'\\
                    \tfrac{A_p}{\varepsilon} x_3' -\kappa (\tfrac{1}{\varepsilon A^p} x_6 - \tfrac{\gamma}{\varepsilon}x_1+\tfrac{\gamma I_0}{\varepsilon A^p}x_2) \end{pmatrix}\\
                    &=\begin{pmatrix}
                        \dot{v}_z \\ \dot{w}_{zz} \\ \dot{\Phi}_z \\
                        C_A v_{zz} - C_{I_0} w_{zzz}-\gamma A^p \dot{\Phi}_z\\
                        C_I w_{zzz} - C_{I_0} v_{zz} + \gamma I_0 \dot{\Phi}_z\\
                        \frac{A^p}{\mu} q_{zz}
                    \end{pmatrix}
            \end{align}}
            on the domain 
           { \begin{align}
                \begin{split}
                        &\Dom(\mathcal{A})=\{ \vect{x} \in \mathcal{X}_\mathcal{H} |  \\
                        & \quad C_A v_z(1)-C_{I_0}w_{zz}(1)-\frac{\gamma}{\varepsilon} A^p \dot{\Phi}(1)=0,\\
                        & \quad C_Iw_{zz}(1)-C_{I_0}v_z(1)+\frac{\gamma}{\varepsilon} I_0 \dot{\Phi}(1)=0,\\ 
                       & \quad \frac{A^p }{\varepsilon} \Phi_z(1)=0 \}.
               \end{split}
            \end{align}}
           It is straightforward to show that the operator \eqref{ch3:eq:closedloopgen_EBBT} generates a strongly continuous semigroup of contractions $T_{cl}(t)$, which is well-posed in a similar fashion as is shown for {\eqref{ch3:eq:opA_FD_EBBT}}. We are able to prove the asymptotic stability of the closed-loop system. Therefore, define the inner-product $\langle \vect{x},\vect{x}\rangle_\mathcal{H}=2\mathcal{H}_{EB}(t)$ on $X_\mathcal{H}$ and assume the following inequalities on the system parameters.
    \begin{assumption}{System parameter inequalities}\label{ch3:ass:sys_coef}
                \begin{align*}
                    \begin{array}{rlrl}
                       \rho_A \rho_I &\neq \rho_{I_0}^2, \quad\quad   & C_A C_I &\neq C_{I_0}^2
                    \end{array}
                \end{align*}
           \end{assumption}
          
               \begin{theorem}\label{ch3:thm:Asympstabz_EBBT_FD}
                    Let the inequalities in Assumption \ref{ch3:ass:sys_coef} hold and consider the closed-loop system  \eqref{ch3:eq:closedloopgen_EBBT}. Furthermore, consider the Lyapunov candidate functions $\mathcal{V}=\mathcal{H}_{EB}(t)$. Then, the closed-loop system \eqref{ch3:eq:closedloop_sys} is well-posed and asymptotically stable.
                \end{theorem}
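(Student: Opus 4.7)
The plan is to prove the theorem via LaSalle's Invariance Principle (Theorem \ref{ch3:thm:LaSalleinv}), using $\mathcal{V}=\mathcal{H}_{EB}(t)$ as the Lyapunov function. First, I would establish well-posedness of the closed-loop system \eqref{ch3:eq:closedloop_sys}: the feedback $\mathcal{I}(t)=\tfrac{\kappa}{h_2-h_1}\dot{\Phi}$ adds a \emph{collocated} dissipative term at the current-source location, so the closed-loop operator $\mathcal{A}$ in \eqref{ch3:eq:closedloopgen_EBBT} differs from $\mathcal{A}_{EB}$ by a bounded, negative semi-definite perturbation acting on the sixth component. A direct repetition of the integration-by-parts computation used in Lemma \ref{ch3:lem:skewadjoint_A_EBBT} then gives $\langle \mathcal{A}\vect{x},\vect{x}\rangle_{\mathcal{H}} = -\kappa\int_0^1 \dot{\Phi}^2\,dz \leq 0$, and an analogous bound for $\mathcal{A}^{*}$, so that Lumer--Phillips (Theorem \ref{ch3:theorem:Lumer-Phillips}) yields a $C_0$-semigroup of contractions $T_{cl}(t)$.

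Next, to apply LaSalle, I need precompactness of the orbits. The standard route is to verify that $\mathcal{A}$ has compact resolvent on $\mathcal{X}_{\mathcal{H}}$. This follows because $\Dom(\mathcal{A})$ embeds compactly into $\mathcal{X}_{\mathcal{H}}$: each component of an element of $\Dom(\mathcal{A})$ lies in $H^1(0,1)$, which is compactly embedded into $L^{2}(0,1)$ by the Rellich--Kondrachov theorem. Since $T_{cl}(t)$ is a contraction semigroup with compact resolvent, every bounded orbit is precompact.

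The main step is then to identify the invariant set
\begin{equation*}
\mathcal{W}=\{\vect{x}\in\Dom(\mathcal{A}) \mid \dot{\Phi}(z)=0 \text{ a.e. on }(0,1)\}
\end{equation*}
and show $\mathcal{W}=\{0\}$. On $\mathcal{W}$, the feedback vanishes and $\dot{\Phi}\equiv 0$ along the trajectory, so $\ddot{\Phi}\equiv 0$ as well. Substituting into the third PDE of \eqref{ch3:eq:PDE_currentactuator_EBBT_FD} gives $\Phi_{z_1z_1} \equiv 0$, and combined with the boundary conditions $\Phi(0)=\Phi_{z_1}(\ell)=0$ we obtain $\Phi\equiv 0$. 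Plugging $\dot{\Phi}\equiv 0$ and $\Phi\equiv 0$ into the first two PDEs reduces the system to the classical undamped Euler--Bernoulli composite equations with the boundary conditions
\begin{equation*}
C_A v_{z_1}(1)-C_{I_0}w_{z_1z_1}(1)=0,\qquad C_I w_{z_1z_1}(1)-C_{I_0}v_{z_1}(1)=0,
\end{equation*}
together with the essential conditions at $z_1=0$. Here Assumption \ref{ch3:ass:sys_coef} enters: because $C_A C_I\neq C_{I_0}^2$ and $\rho_A\rho_I\neq\rho_{I_0}^2$, the linear system decouples non-degenerately and the only solution compatible with $\dot{\Phi}=0$ is the trivial one. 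This will be the main technical obstacle, requiring a spectral (eigenfunction) argument for the remaining mechanical subsystem to rule out nontrivial standing modes; here I would use a Holmgren-type unique continuation from the boundary conditions at $z_1=1$ propagated through the coupled ODE system in $(v,w)$.

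Combining precompactness of orbits with $\mathcal{W}=\{0\}$, LaSalle's Invariance Principle yields $\lim_{t\to\infty}\|T_{cl}(t)\vect{x}_0\|_{\mathcal{H}}=0$ for every $\vect{x}_0\in\Dom(\mathcal{A})$, and by density and contractivity for every $\vect{x}_0\in\mathcal{X}_{\mathcal{H}}$. This establishes asymptotic stability.
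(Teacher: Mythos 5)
Your overall architecture (Lumer--Phillips for the closed-loop contraction semigroup, compact resolvent via the compact Sobolev embedding for precompactness of orbits, then LaSalle with $\mathcal{W}=\{\vect{0}\}$) is the same as the paper's, and the well-posedness and precompactness parts are fine. The problem is in the invariant-set analysis. From $\dot{\Phi}\equiv 0$, $\ddot{\Phi}\equiv 0$ and $\mathcal{I}=0$, the third equation of \eqref{ch3:eq:PDE_currentactuator_EBBT_FD} does \emph{not} give $\Phi_{z_1z_1}\equiv 0$; it gives
\begin{equation*}
\tfrac{1}{\mu}A^p\,\Phi_{z_1z_1}=\gamma\bigl(A^p\dot{v}_{z_1}-I^p_0\dot{w}_{z_1z_1}\bigr),
\end{equation*}
because the piezoelectric coupling term does not vanish a priori on the invariant set. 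Dropping that term is not a cosmetic slip: it is exactly the channel through which the condition $\dot{\Phi}\equiv 0$ is transmitted to the mechanical variables. Once you discard it, your ``reduced'' problem is the undamped, conservative composite beam with its natural boundary conditions, and that system \emph{does} possess nontrivial standing modes; so the claim that ``the only solution compatible with $\dot{\Phi}=0$ is the trivial one'' cannot be extracted from the reduced mechanical system alone, and the Holmgren/unique-continuation step you defer (and explicitly label the main technical obstacle) is left entirely unproved. The correct use of the coupling is, e.g., to note that $\dot{\Phi}\equiv 0$ makes $\Phi$, hence $\Phi_{z_1z_1}$, time-independent, so differentiating the displayed identity in time yields $A^p\ddot{v}_{z_1}-I^p_0\ddot{w}_{z_1z_1}\equiv 0$, which must then be combined with the mechanical equations; as your proposal stands, neither this nor any substitute argument is carried out, so $\mathcal{W}=\{\vect{0}\}$ is asserted rather than shown.

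For comparison, the paper takes a shorter (if terse) route at this point: it characterizes the invariant set through the stationary problem $\mathcal{A}\vect{x}=\vect{0}$, whose first three rows force the velocity components to vanish, after which Assumption \ref{ch3:ass:sys_coef} (specifically $C_AC_I\neq C_{I_0}^2$) decouples the remaining equations into $(C_A-\tfrac{C_{I_0}^2}{C_I})v_{zz}=0$, $(C_I-\tfrac{C_{I_0}^2}{C_A})w_{zzz}=0$, $\tfrac{A^p}{\mu}\Phi_{zz}=0$ with homogeneous boundary conditions, giving only the zero solution. Note also that $\rho_A\rho_I\neq\rho_{I_0}^2$ is used for invertibility of the inertia coupling (so the generator and its coefficients are well defined), not for the ``decoupling'' you attribute to it. To repair your proof you must either justify that the invariant set consists of stationary solutions (and then argue as the paper does), or keep the coupling term and complete the dynamic argument that rules out mechanical standing modes compatible with $\dot{\Phi}\equiv 0$; at present that step is missing.
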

                \begin{proof}
                    The closed and densely defined operator $\mathcal{A}$ and its adjoint $\mathcal{A}^\ast$ are dissipative, i.e.
                      \begin{align*}
                            \angles{A \vect{x},\vect{x}}_X&= -\int_0 ^1 \kappa \dot{\Phi}^2(z)  ~dz\leq 0, \\
                            \angles{A^*{\vect{x}},\vect{x}}_X&= -\int_0 ^1 \kappa \dot{\Phi}^2(z)  ~dz\leq 0,
                      \end{align*} 
                      where we computed $\mathcal{A}^\ast$ in a similar fashion as in Lemma \ref{ch3:lem:skewadjoint_A_EBBT}. Hence, by use of the Lummer-Phillips Theorem \ref{ch3:theorem:Lumer-Phillips}, the operator $\mathcal{A}$ generates a strongly continuous semigroup of contractions and is well-posed.\\
                    From the Sobolev embedding Theorem \cite{Sobolevembedding2006a} we have that $\Dom(\mathcal{A})$ is compact in $\mathcal{X}_\mathcal{H}$ and thus $\mathcal{A}$ is closed. Therefore, the resolvent of $\mathcal{A}$ is compact for all $\lambda$ in the resolvent set \cite{Kato1995}. Using Theorem \cite{Luo1999SSIDSA}, we see that the orbit  $\tilde{\gamma}(\vect{x})$ is precompact, and the limit set is non-empty. It remains to show that the largest invariant subset 
                      $$\mathcal{W}=\{x \mid \dot{\mathcal{V}}(\vect{x})=0\},$$
                      with {$\dot{\Phi}(z)\equiv 0$} contains only the zero vector $\vect{0}$. Therefore, let Assumption \ref{ch3:ass:sys_coef} hold and recall the boundary conditions {$v(0)=w(0)=w_z(0)=\Phi(0)=0$} and compute  the solution to the ode
                        \begin{align}
                            \mathcal{A}\vect{x}(z) = \vect{0}, 
                        \end{align}
                      with use of
                        \begin{align}
                            (C_A - \tfrac{C_{I_0}^2}{C_I}) v_{zz} &= 0 , ~ v(0) = 0, ~v_z(1) = 0 \nonumber\\
                            (C_I  - \tfrac{C_{I_0}^2}{C_A} )w_{zzz} & = 0, ~ w(0)=0 , ~w_z(0) = 0,~ w_{zz}(1) =0 \nonumber\\
                            \tfrac{A^p}{\mu }\Phi_{zz}& = 0,~ \Phi(0) =0 ,~ \Phi_z(1)=0
                        \end{align} 
                        we obtain the solutions $v(z)=0$, $w(0)=0$, and $\Phi(0)=0$, hence we conclude                      $\vect{x}(z)\equiv\vect{0}$ for $\vect{x}\in\Dom(\mathcal{A})$. Therefore, $\vect{0}\in \mathcal{W}$ is the only solution contained in $\mathcal{W}$. Finally, by use of LaSalle's Invariance Principle Theorem \ref{ch3:thm:LaSalleinv}, we have that 
                      \begin{align}
                          \lim_{t\rightarrow\infty} d(T_{cl}(t)\vect{x},\mathcal{W})=d(T_{cl}(t)\vect{x},\vect{0})=0,
                      \end{align}
                      for all $\vect{x}\in \Dom(\mathcal{X}_\mathcal{H})$ and conclude that the closed-loop system is asymptotically stable. 
                  \end{proof}
            \hfill \break
            
            For the controlled piezoelectric composite \eqref{ch3:eq:PDE_currentactuator_TBT_FD} with Timoshenko beam theory, we can show in a similar fashion the asymptotical stabilizability of the closed loop system obtained by control \eqref{ch3:eq:control_choice}. Therefore, define $\theta=w_z-\phi$ and $\vect{x}=\col(v_z,\phi_{z},\theta,\Phi_z,\rho_A \dot{v}-\rho_{I_0}\dot{w}_z,\rho_I \dot{w}_z - \rho_{I_0}\dot{v},\rho_A \dot{w}, \tfrac{A^p }{\varepsilon} \dot{\Phi}+\gamma (A^p v_z- I_0 w_{zz}))$ such that $x\in \mathcal{X}_\mathcal{H}$ and denote the closed-loop system obtained via the control choice \eqref{ch3:eq:control_choice} as follows,
            {\begin{align}\label{ch3:eq:closedloop_sys2}
                \dot{\vect{x}}&=\mathcal{A}_2\vect{x}, t>0 \\
                \vect{x}(0)&=\vect{x}_0 \in \Dom(\mathcal{A}_2) \nonumber
            \end{align}}
            with the closed and densely defined operator
           { \begin{align}\label{ch3:eq:closedloopgen_TBT}
                    &\mathcal{A}_2\vect{x}= \nonumber \\
                    &\begin{pmatrix} \tfrac{\rho_I}{\rho_A\rho_I-\rho_{I_0}^2}x_5' + \tfrac{\rho_{I_0}}{\rho_A\rho_I-\rho_{I_0}^2} x_6'\\ \tfrac{\rho_A}{\rho_A\rho_I-\rho_{I_0}^2} x_6' + \tfrac{\rho_{I_0}}{\rho_A\rho_I-\rho_{I_0}^2}x_5'    \\  \tfrac{1}{\rho_A}x_7' - \tfrac{\rho_{I_0}}{\rho_A\rho_I-\rho_{I_0}^2}x_5 - \tfrac{\rho_{A}}{\rho_A\rho_I-\rho_{I_0}^2} x_6\\ \tfrac{1}{\varepsilon A^p} x_8' - \tfrac{\gamma}{\varepsilon}x_1'+\tfrac{\gamma I_0}{\varepsilon A^p}x_2'  \\
                    (C_A+\tfrac{\gamma^2 A^p}{\varepsilon})x_1'-(C_{I_0}+\tfrac{\gamma^2 I_0}{\varepsilon}))x_2'-\frac{\gamma A_p}{\varepsilon}  x_8' \\
                    (C_I+\tfrac{\gamma^2 I_0^2}{\varepsilon A^p}))x_2'-(C_{I_0}+\tfrac{\gamma^2 I_0}{\varepsilon}))x_1'+\frac{\gamma I_0 }{\varepsilon}x_8'+ \tfrac{1}{2}G_A x_3\\
                    \tfrac{1}{2}G_A x_3' \\
                    \tfrac{A_p}{\varepsilon} x_4' -\kappa (\tfrac{1}{\varepsilon A^p} x_8 - \tfrac{\gamma}{\varepsilon}x_1+\tfrac{\gamma I_0}{\varepsilon A^p}x_2) \end{pmatrix} \nonumber\\
                    &=\begin{pmatrix}
                        \dot{v}_z \\ \dot{\phi}_z \\ \dot{\theta} \\ \dot{\Phi}_z \\
                        C_A v_{zz} - C_{I_0} \phi_{zz} - \gamma A^p \dot{\Phi}_z \\
                        C_I \phi_{zz} - C_{I_0} v_{zz} +\gamma I_0 \dot{\Phi}_z + \tfrac{1}{2}G_A\theta\\
                        \tfrac{1}{2}G_A \theta_z \\ \frac{A^p }{\mu} \Phi_z
                    \end{pmatrix}
            \end{align}}
            on the domain 
          { \begin{align}
                \begin{split}
                        &\Dom(\mathcal{A}_2)=\{ \vect{x} \in \mathcal{X}_\mathcal{H} |  \\
                        & \quad C_A v_z(1)-C_{I_0}\phi_z(1)-\frac{\gamma}{\varepsilon} A^p \dot{\Phi}(1)=0,\\
                        & \quad C_I\phi_z(1)-C_{I_0}v_z(1)+\frac{\gamma}{\varepsilon} I_0 \dot{\Phi}(1)=0,\\ 
                        & \quad \theta(0)=0, ~\tfrac{A^p}{\varepsilon} \Phi_z(1)=0 \}.
               \end{split}
            \end{align}}
            Similarly, to \eqref{ch3:eq:closedloopgen_EBBT}, it is straightforward to show that the operator \eqref{ch3:eq:closedloopgen_TBT} generates a strongly continuous semigroup of contractions, which is well-posed in a similar fashion as is shown for \eqref{ch3:eq:opA_FD_TBT}. Furthermore, we are able to prove the asymptotic stability of the closed-loop system in a similar fashion as is shown for \eqref{ch3:eq:closedloopgen_EBBT}, with the use of the inner-product $\langle \vect{x},\vect{x}\rangle_\mathcal{H}=2\mathcal{H}_{T}(t)$ on $X_\mathcal{H}$ and use Assumption \ref{ch3:ass:sys_coef} for the system parameters.
            \begin{theorem}\label{ch3:thm:Asympstabz_TBT_FD}
                    Let the inequalities in Assumption \ref{ch3:ass:sys_coef} hold and consider the closed-loop system  \eqref{ch3:eq:closedloopgen_TBT}. Furthermore, consider the Lyapunov candidate functions $\mathcal{V}=\mathcal{H}_T(t)$. Then, the closed-loop system \eqref{ch3:eq:closedloop_sys2} is well-posed and asymptotically stable.
                \end{theorem}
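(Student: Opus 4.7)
The plan is to mirror the proof of Theorem \ref{ch3:thm:Asympstabz_EBBT_FD} step by step, adapting each argument to the Timoshenko setting where the extra shear variable $\theta = w_z - \phi$ and the transverse velocity $\dot{w}$ enter the state. First I would verify that $\mathcal{A}_2$ and $\mathcal{A}_2^\ast$ are both dissipative on $\mathcal{X}_\mathcal{H}$ equipped with the inner product $\langle \cdot, \cdot\rangle_\mathcal{H} = 2\mathcal{H}_T(t)$. The computation $\langle \mathcal{A}_2 \vect{x}, \vect{x}\rangle_\mathcal{H}$ is the same as in \eqref{ch3:eq:Adissipative_TBT} except that the feedback \eqref{ch3:eq:control_choice} produces the extra term $-\int_0^1 \kappa \dot\Phi^2 \, dz \leq 0$ in place of the boundary term that vanished before; the adjoint is obtained along the lines of Lemma \ref{ch3:lem:skewadjoint_A_TBT} and yields the same dissipation inequality. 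Lumer--Phillips (Theorem \ref{ch3:theorem:Lumer-Phillips}) then gives that $\mathcal{A}_2$ generates a strongly continuous semigroup of contractions $T_{cl,2}(t)$, establishing well-posedness.

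Next I would invoke the Sobolev embedding theorem to conclude that the embedding $\Dom(\mathcal{A}_2) \hookrightarrow \mathcal{X}_\mathcal{H}$ is compact, so the resolvent of $\mathcal{A}_2$ is compact and every orbit $\tilde{\lambda}(\vect{x}) = \bigcup_{t\geq 0} T_{cl,2}(t)\vect{x}$ is precompact. This sets up the application of LaSalle's invariance principle (Theorem \ref{ch3:thm:LaSalleinv}) with the Lyapunov function $\mathcal{V} = \mathcal{H}_T(t)$, whose derivative along trajectories equals $-\int_0^1 \kappa \dot\Phi^2 \, dz$ as in \eqref{ch3:eq:Lyap_roc}.

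The main obstacle is the final step: identifying the largest invariant set
\[
\mathcal{W} = \{ \vect{x} \in \Dom(\mathcal{A}_2) \mid \dot{\mathcal{V}}(\vect{x}) = 0 \}
\]
and showing $\mathcal{W} = \{\vect{0}\}$. On $\mathcal{W}$ we have $\dot\Phi \equiv 0$, hence by invariance $\ddot\Phi \equiv 0$ as well, which feeds back into the $\Phi$-equation of \eqref{ch3:eq:PDE_currentactuator_TBT_FD} and forces $\Phi_{z_1 z_1} \equiv 0$. Combined with the essential and natural boundary conditions $\Phi(0)=0$, $\Phi_{z_1}(\ell)=0$, we get $\Phi \equiv 0$. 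Substituting back, the remaining ODE system $\mathcal{A}_2 \vect{x} = 0$ decouples into the purely mechanical Timoshenko equations. Under Assumption \ref{ch3:ass:sys_coef} (in particular $C_A C_I \neq C_{I_0}^2$ and $\rho_A \rho_I \neq \rho_{I_0}^2$), the coupled $(v,\phi)$ subsystem reduces to $v_{z_1 z_1} = 0$ and $\phi_{z_1 z_1} = G_A \theta / 2 $-type relations, which together with the shear equation $\tfrac{1}{2}G_A\theta_{z_1} = 0$ and the boundary conditions $v(0) = \phi(0) = \theta(0) = 0$ and $v_{z_1}(\ell) = \phi_{z_1}(\ell) = 0$, $(w_{z_1}-\phi)(\ell)=0$ force $v, \phi, w \equiv 0$.

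Once $\mathcal{W} = \{\vect{0}\}$ is established, LaSalle yields $\lim_{t\to\infty} d(T_{cl,2}(t)\vect{x}, \vect{0}) = 0$ for every $\vect{x} \in \Dom(\mathcal{A}_2)$, which is the asymptotic stability claim. The one subtlety I expect to have to check carefully is that the Timoshenko shear coupling $\tfrac{1}{2}G_A(w_{z_1}-\phi)$ between the $\phi$-equation and the $w$-equation in the invariant-set ODE does not admit nontrivial solutions with the given mixed boundary data; this is the step where the extra state compared to the Euler--Bernoulli proof actually changes the analysis.
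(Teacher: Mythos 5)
Your proposal follows essentially the same route as the paper's own proof: dissipativity of $\mathcal{A}_2$ and $\mathcal{A}_2^\ast$ computed as in Lemma \ref{ch3:lem:skewadjoint_A_TBT} plus Lumer--Phillips for well-posedness, Sobolev embedding and compact resolvent for precompact orbits, and LaSalle with $\mathcal{V}=\mathcal{H}_T$ where the invariant set is shown to be trivial by solving the decoupled ODEs ($\theta_z=0$, the $v$- and $\phi$-equations under Assumption \ref{ch3:ass:sys_coef}, and $\Phi_{zz}=0$) with the stated boundary data. Your slightly more explicit handling of how $\dot\Phi\equiv 0$ propagates to $\Phi\equiv 0$ and then decouples the mechanical Timoshenko subsystem matches what the paper does implicitly, so the argument is correct and not a genuinely different approach.
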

                \begin{proof}
                    The closed and densely defined operator $\mathcal{A}_2$ and its adjoint $\mathcal{A}_2^\ast$ are dissipative, i.e.
                      \begin{align*}
                            \angles{A_2 \vect{x},\vect{x}}_X&= -\int_0 ^1 \kappa \dot{\Phi}^2(z)  ~dz\leq 0, \\
                            \angles{A_2^*{\vect{x}},\vect{x}}_X&= -\int_0 ^1 \kappa \dot{\Phi}^2(z)  ~dz\leq 0,
                      \end{align*} 
                      where we computed $\mathcal{A}_2^\ast$ in a similar fashion as in Lemma \ref{ch3:lem:skewadjoint_A_TBT}. Therefore, by use of the Lummer-Phillips Theorem \ref{ch3:theorem:Lumer-Phillips}, the operator $\mathcal{A}_2$ generates a strongly continuous semigroup of contractions $T_{cl2}(t)$ and is well-posed.\\
                      Similarly to Theorem \ref{ch3:thm:Asympstabz_EBBT_FD}, we have from the Sobolev embedding Theorem \cite{Sobolevembedding2006a} that $\Dom(\mathcal{A}_2)$ is compact in $\mathcal{X}_\mathcal{H}$ and thus $\mathcal{A}_2$ is closed. Therefore, the resolvent of $\mathcal{A}_2$ is compact for all $\lambda$ in the resolvent set \cite{Kato1995}. Using Theorem \cite{Luo1999SSIDSA}, we see that the orbit  $\tilde{\gamma}(\vect{x})$ is precompact, and the limit set is non-empty. It remains to show that the largest invariant subset 
                      $$\mathcal{W}=\{x \mid \dot{\mathcal{V}}(\vect{x})=0\},$$
                      with {$\dot{\Phi}(1)\equiv 0$} contains only the zero vector $\vect{0}$. Therefore, let Assumption \ref{ch3:ass:sys_coef} hold and recall the boundary conditions {$v(0)=\phi(0)=\phi_z(0)=\theta(0)=\Phi(0)=0$} and compute  the solution to the ode
                        \begin{align}
                            \mathcal{A}_2\vect{x}(z) = \vect{0}, 
                        \end{align}
                      {with use of
                        \begin{align}
                            \theta_z &= 0 , ~\theta(0)=0 \nonumber \\
                            (C_A - \tfrac{C_{I_0}^2}{C_I}) v_{zz} &= 0 , ~ v(0) = 0, ~v_z(1) = 0 \nonumber\\
                            (C_I  - \tfrac{C_{I_0}^2}{C_A} )\phi_{zz} & = 0, ~ \phi(0)=0 , ~\phi_z(0) = 0\\
                            \tfrac{A^p}{\mu }\Phi_{zz}& = 0,~ \Phi(0) =0 ,~ \Phi_z(1)=0  \nonumber
                        \end{align} 
                        we obtain the solutions $\theta(z)=0$, $v(z)=0$, $\phi(0)=0$, and $\Phi(0)=0$, hence we conclude $\vect{x}(z)\equiv\vect{0}$ for $\vect{x}\in\Dom(\mathcal{A}_2)$. Therefore, $\vect{0}\in \mathcal{W}$ is the only solution contained in $\mathcal{W}$. Finally, by use of LaSalle's Invariance Principle Theorem \ref{ch3:thm:LaSalleinv}, we have that 
                      \begin{align}
                          \lim_{t\rightarrow\infty} d(T_{cl2}(t)\vect{x},\mathcal{W})=d(T_{cl2}(t)\vect{x},\vect{0})=0,
                      \end{align}
                      for all $\vect{x}\in \Dom(\mathcal{X}_\mathcal{H})$ and conclude that the closed-loop system is asymptotically stable. }
                  \end{proof}

            \begin{remark}\label{ch3:rem:notstabz_QS_comp}
                The current-controlled piezoelectric composite and actuator under static $(B_2(=\Phi_z)\downarrow 0)$ and quasi-static electric field assumption $(D_3 (=\varepsilon_{33} \dot{Phi}+\gamma \epsilon_{11} \downarrow 0)$ in \eqref{ch3:eq:PDE_currentactuator_EBBT_FD} and \eqref{ch3:eq:PDE_currentactuator_TBT_FD} are not stabilizable for both beam theories since the electromagnetic dynamics and the current input are decoupled from the bending and stretching equations of the piezoelectric composites or actuators. 
            \end{remark}           
            
            \begin{figure}[h]
         			\centering
            	    	\includegraphics[width=0.95\columnwidth]{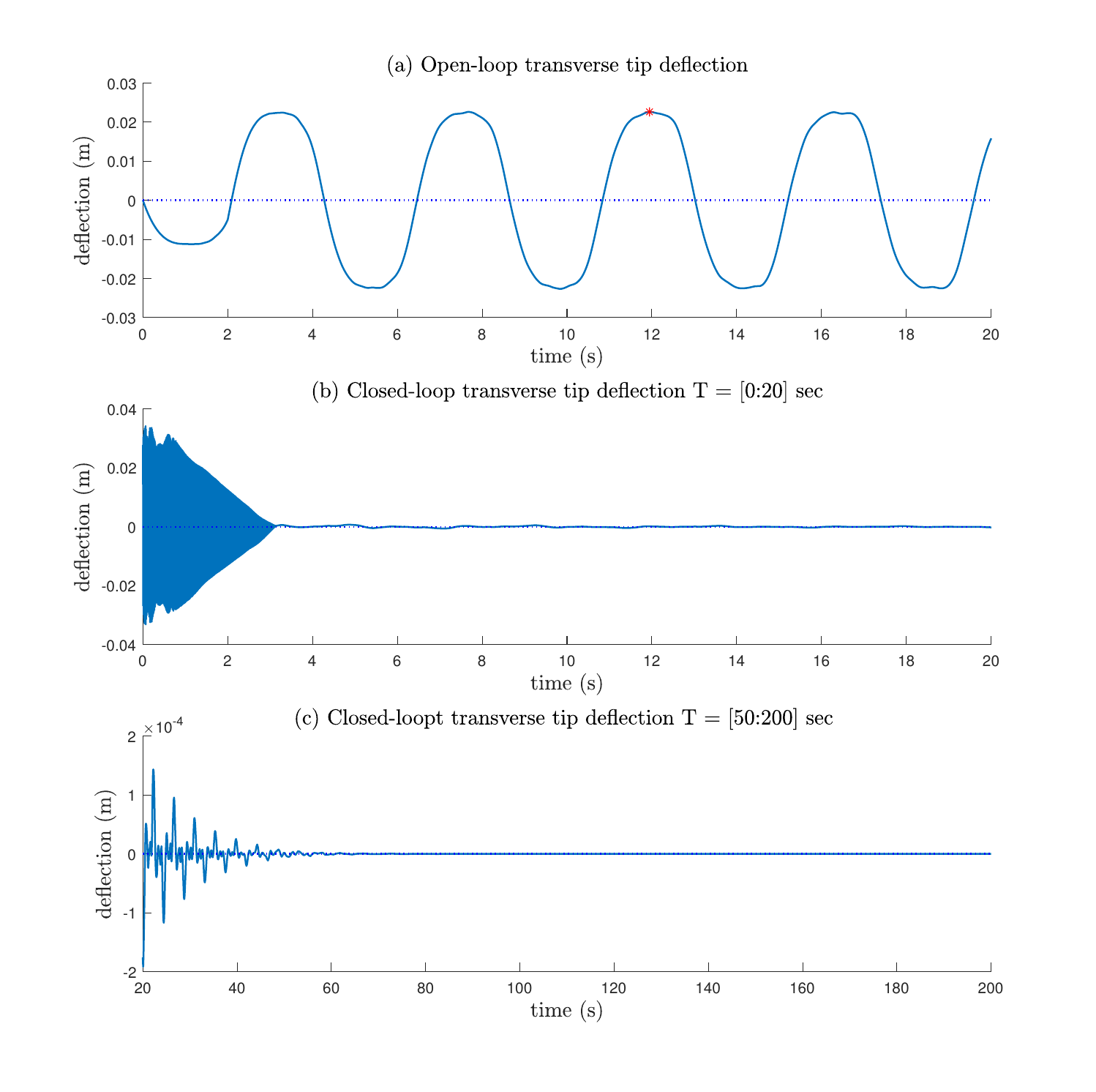}
            				\caption[Open- and closed-loop tip deflection of current-actuated \ac{EBBT} piezoelectric composite]{Open- and closed-loop tip deflection $(w(1))$ of the piezoelectric composite with \ac{EBBT}. In (a), the uncontrolled (open-loop) piezoelectric composite is actuated with 500 A for the first two seconds and reaches a deflection of ~2.2 cm at the tip. After two seconds, the actuation is stopped, and it can be seen that the tip of the composite moves back to zero and continues to vibrate regularly. In (b) and (c), the closed-loop system behaviour for the transverse tip deflection is presented. In (b), the asymptotically stabilizing behaviour is shown, and in (c), the convergence of the deflection to zero is underlined.}
                    \label{ch3:fig:sim_EBBT} 
                \end{figure}
                \begin{figure}[h]
            		\centering
            			\includegraphics[width=0.95\columnwidth]{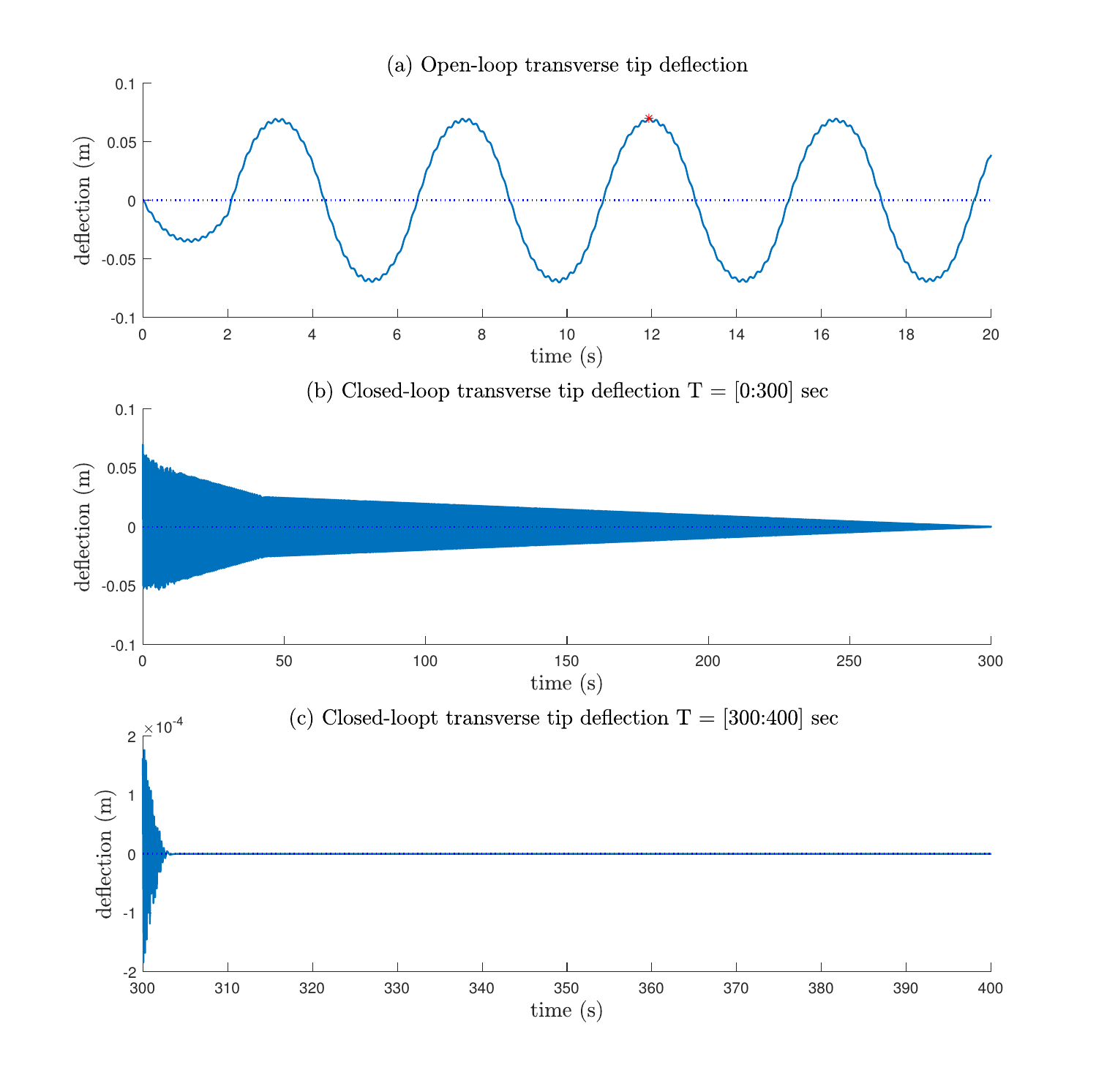}
            				\caption[Open- and closed-loop tip deflection of current-actuated \ac{TBT} piezoelectric composite]{Open- and closed-loop tip deflection $(w(1))$ of the piezoelectric composite with \ac{TBT}. In (a), the uncontrolled (open-loop) piezoelectric composite is actuated with 500 A for the first two seconds, reaching a deflection of ~6,7 cm at the tip. After two seconds, the actuation is stopped, and it can be seen that the tip of the composite moves back to zero and continues to vibrate regularly. In (b) and (c), the closed-loop system behaviour for the transverse tip deflection is presented. In (b), the asymptotically stabilizing behaviour is shown, and in (c), the convergence of the deflection to zero is underlined.}
            			\label{ch3:fig:sim_TBT}
        	    \end{figure}
                
            \section{Simulation results of the asymptotically stabilizing piezoelectric composites}\label{ch3:sec:simulations}
                We include some simulation results for illustrative purposes underlining the asymptotic stabilizability of the piezoelectric composites presented in this work. Therefore, we consider two piezoelectric composites, with the top layer being a  \ac{PZT}-5\footnote{\text{https://support.piezo.com/article/62-material-properties}} piezoelectric layer and the mechanical substrate (with centroidal coordinates) is a steel 304\footnote{https://support.piezo.com/article/62-material-properties\#pack} mechanical layer of the same dimensions. An overview of the system parameters is given in Table \ref{ch3:tab:symbols}. The open-loop and closed-loop simulations, by closing the loop in a standard passive manner \cite{Jacobzwart2012}, are obtained using the structure-preserving discretization method \cite{GoloSchaft2004} with N=20 segments. The time-discretisation is complimented with the variable-step ode23s-solver (build-in Matlab\textsuperscript{\textregistered} solver). Furthermore, to overcome difficulties with the spatial discretization of (piezoelectric) models using a mixed Finite-Element method \cite{GoloSchaft2004}, mentioned in \cite{voss2010port}, we use a trapezoidal spatial integration and time integration to compute the longitudinal $(v(z,t))$ and transversal deflection $w(z,t)$, respectively.
                                
               The transverse tip behaviour of the open-loop and closed-loop current actuated piezoelectric composite with \ac{EBBT}, respectively \eqref{ch3:eq:PDE_currentactuator_EBBT_FD} and \eqref{ch3:eq:closedloopgen_EBBT} are depicted in Fig \ref{ch3:fig:sim_EBBT}. In Fig \ref{ch3:fig:sim_EBBT}(a), the open-loop transverse vibrations are shown, and in Fig \ref{ch3:fig:sim_EBBT}(b) and (c), the closed-loop behaviour with $\kappa=10$ is shown. The starting point $(t=0)$ for the closed-loop system is the top of the third lobe of the open-loop behaviour indicated by the red star in Fig \ref{ch3:fig:sim_EBBT}(a).
                Similarly, for the open-loop and closed-loop current actuated piezoelectric composite with \ac{TBT}, respectively \eqref{ch3:eq:PDE_currentactuator_TBT_FD} and \eqref{ch3:eq:closedloopgen_TBT}, the transverse tip behaviours are depicted in Fig \ref{ch3:fig:sim_TBT}. In Fig \ref{ch3:fig:sim_TBT}(a), the open-loop transverse vibrations are shown, and in Fig \ref{ch3:fig:sim_TBT}(b) and (c), the closed-loop behaviour with $\kappa=10$ is shown. The starting point $(t=0)$ for the closed-loop system is the top of the third lobe of the open-loop behaviour indicated by the red star in Fig \ref{ch3:fig:sim_TBT}(a). Comparing the behaviour of the two different beam theories, we see that the open-loop tip displacement is much larger in the case of using \ac{TBT}. Furthermore, the piezoelectric composite with \ac{EBBT} stabilizes faster than the piezoelectric composite with \ac{TBT}.

             \begin{table}
                    \centering
                    \caption[System parameters]{System parameters}
                    {\begin{tabular}{lll}
                        \toprule
                            Geometry & Description& Value\\
                        \midrule
                            $2g_b$  & Layer width & $0,1 ~m$ \\
                            $h_b-h_a$ & Layer thickness & $0,01 ~m$\\
                        \midrule
                             Piezo parameters     \\
                        \midrule
                            $\rho_p$    &  mass-density     & $7950 ~\frac{\text{kg}}{\text{m}^3}$\\
                            $C_p$         & Stiffness         & $66\times10^9 ~\frac{\text{N}}{\text{m}^2}$\\
                            $\gamma$    &   Coupling coefficient &  $12.54  ~{\frac{\text{C}}{\text{m}^2}}$\\
                            $\varepsilon$     & Impermittivity & $10^{6}~ \frac{m}{F}$ \\
                            $\mu$ & Magnetic permeability & $1.2 \times 10^{-6} ~\frac{H}{m}$  \\
                            \midrule
                            Substrate parameters &  &  \\
                            \midrule
                            $\rho_s$    &  mass-density     & $8000 ~\frac{\text{kg}}{\text{m}^3}$\\
                            $C_s$       & Stiffness         & $193\times10^9  ~\frac{\text{N}}{\text{m}^2}$\\
                        \bottomrule
                        \end{tabular}}
                    \label{ch3:tab:symbols}
                \end{table}

\section{{Discussion and further research}}\label{ch3:sec:discussions}

In this work, we propose two new well-posed current-controlled piezoelectric composite models with a fully dynamic electromagnetic field that consider different beam theories, i.e. the Euler-Bernoulli and Timoshenko beam theories, for the mechanical domain. The novelty lies in using a combined Lagrangian that couples the equations through so-called traditors, which gyrates forces and flows in the system to obtain a system of well-posed dynamical equations and circumvents the use of a gauge condition within the electromagnetic domain. Furthermore, we show that the derived piezoelectric composites derived with the proposed modelling approach are asymptotically stabilizable for certain system parameters.  \blue{Whereas, the fully dynamic electromagnetic systems derived with vector potentials combined with the Coulomb gauge condition are not stabilizable \cite{MenOCDC2014,OzerTAC2019}.} The derived classical passivity-based control law incurs some in-domain electromagnetic damping. It might be interesting to look into this from a material science perspective. Furthermore, it would be interesting to see if it is possible to extend the boundary control Krasovskii passivity control methodologies presented in \cite{Jong2023} for distributive control inputs such as the derived current controlled piezoelectric composite models.

\section*{Acknowlegdement}
{We would like to acknowledge Kirsten A. Morris for her insights and suggestions during the development of this work.}

\bibliographystyle{IEEEtran}

\end{document}